%% file: trace_tree_magmas.tex
\documentclass[runningheads]{llncs}

\usepackage[T1]{fontenc}
\usepackage[utf8]{inputenc}
\usepackage{amsmath,amssymb}
\usepackage{booktabs,array}
\usepackage[hidelinks]{hyperref}

\newcommand{\Law}[1]{\ensuremath{\mathrm{E}#1}}
\newcommand{\op}{\mathbin{\diamond}}
\newcommand{\Tree}{\mathcal{T}}
\newcommand{\Code}{\mathsf{C}}
\newcommand{\Step}{\mathsf{S}}
\newcommand{\NoExc}{\mathsf{NoExc}}
\newcommand{\evalop}{\mathsf{eval}}
\newcommand{\pair}[2]{\langle #1,#2\rangle}

\newcommand{\Lean}{\textsc{Lean}}

\title{Trace-Tree Magmas: Proof-Producing Infinite Countermodels and\texorpdfstring{\\}{: }
28 New Order-Five Austin Classifications}
\titlerunning{Trace-Tree Magmas}

\author{Jiaming Zhao\inst{1,2} \and
  Bing Wu\inst{1} \and
  Tong Yang\inst{2} \and
  Xu Miao\inst{1}\thanks{Corresponding author: \email{miaoxu@zetyun.com}.}}
\authorrunning{J. Zhao et al.}
\institute{Yanbiao Lab, DataCanvas Co., Ltd., Beijing, China
  \and Peking University, Beijing, China}

\begin{document}
\maketitle

\begin{abstract}
Finite model finders cannot witness an Austin law: an identity whose finite
models are all trivial but which has a nontrivial infinite model.  We introduce
rank-decreasing sparse trace-tree magmas, a finitely presented class of total
operations on a countably infinite constructor-tree carrier.  The default
product pairs its arguments, while finitely many positive Horn clauses define
exceptional outputs.  Our main procedure derives candidate clauses from
symbolic evaluation traces.  For every reported model, it proves functionality
of the exceptional relation by descent on constructor size, proves the source
identity by exhaustive symbolic case analysis, and compiles the result into a
self-contained Lean~4 certificate.  A least simultaneous fixed point gives the
model language an implementation-independent semantics, so bounded search may
miss models but cannot invalidate a certified result.

On ETP's 96 order-five Austin candidates, we discover and Lean-verify infinite
countermodels for 28 identities that had no prior public classification in our
documented literature-and-artifact audit.  They form fourteen classes under
opposite-magma duality and establish 28 new Austin classifications; each class
has a one-clause trace-tree representative.  Supplemental
searches independently reconstruct four further identities in two duality
classes previously reported by ALPS, giving 32 certified candidates in total.
We evaluate generalization separately on Canonical-4187, the deduplicated union
of Order5-130---ten known Austin laws, the 96 candidates, and 24 identities open
on both the finite and infinite sides---and the public 4,141-row ALPS pool.  A
fresh trace run generates 636 certificates, all accepted by Judge v3: 36 from
Order5-130 and 600 from the ALPS-only canonical remainder.  In a separate
same-resource comparison at two vCPUs, 2,048~MiB, and 120 seconds per identity,
Vampire~5.0.1, E~3.5.1, and a parameter-free complete Twee~2.6.1 run have a
94-class union on the implication-proof side.  Only Twee produces trusted
counter-satisfiable outcomes: 18 canonical classes, of which independent
finite-side certificates force 16 to be infinite.  None of these ATP runs
emits an explicit model or a Lean certificate, and none is decisive on the 28
newly classified identities.  This comparison documents a complementary
capability gap.  To the best of our
audit, this is the first automated system to synthesize this trace-tree model
family, generate its well-founded inversion proofs, and emit self-contained
Lean~4 certificates.

\keywords{automated deduction \and infinite model synthesis \and equational
logic \and Austin identities \and Lean 4 \and term rewriting}
\end{abstract}

\section{Introduction}

A magma consists of a carrier and one binary operation.  Despite this minimal
signature, its equational logic exhibits a sharp finite--infinite gap.  Austin
showed that an identity may force every finite model to be trivial while still
admitting a nontrivial infinite model~\cite{Austin1965,Austin1966}.  Such
\emph{Austin laws} are difficult for standard model search for a structural
reason: no finite multiplication table can witness the required model.

The Equational Theories Project (ETP) turns this classical phenomenon into a
large, formally specified classification problem~\cite{BolanEtAl2025}.  Its
order-five study isolated 96 identities known to have only trivial finite
models but with no known infinite model in the fixed public snapshot
~\cite{ETPOrder5Austin}.  Proving that one of these identities is an Austin law
requires more than a satisfiability answer.  One must describe an infinite
carrier and a total binary operation, prove the identity for every valuation,
and show that the carrier is nontrivial.

This paper develops automated semantic searches that produce exactly such
objects.  Their principal language consists of operations on the free
constructor algebra
\[
  \Tree ::= \mathsf e \mid \mathsf k(\Tree)
          \mid \pair{\Tree}{\Tree}.
\]
The default product is the ordered pair of its arguments.  A finite set of
universally quantified Horn rules presents exceptional products; the induced
relations are defined by a least simultaneous fixed point, without reference
to the search program.  The principal algorithm extracts rule shapes from an
evaluation trace of the source law.  Pair projections recover rule parameters,
while constructor size provides a well-founded measure for recursive
inversion.  The generated source proof then considers every relevant way in
which an earlier computation can return either an exceptional result or the
default pair.  Thus the result is not a finite approximation to a large model.
It is a finite presentation of a genuinely infinite magma.

The second defining feature is proof production.  A model is reported only
with a self-contained Lean~4 theorem that reconstructs the carrier and
operation, proves universal source validity, and refutes the target identity.
The common constructor presentation separately supplies the uniform injection
\(n\mapsto\mathsf k^n(\mathsf e)\), so every trace carrier is infinite
(Lemma~\ref{lem:carrier-infinite}).  The Python search and code generator
remain untrusted: a wrong candidate or a defective proof script is rejected
by Lean's kernel~\cite{deMouraUllrich2021,Mathlib2020}.

The trace procedure produced the first public nontrivial infinite models found
in our documented literature-and-artifact audit for 28 order-five identities,
forming 14 duality classes.  The distinction from prior ATP results is
concrete.  All 28 occur among the 96 order-five candidates that the published
Vampire campaign left unresolved; that campaign reports no constructed
infinite model for them~\cite{ETPOrder5Austin}.  In our same-resource
experiment, Vampire, E, and one parameter-free complete Twee run likewise
return no decisive status on any of the 28.  The later Vampire/E extraction of
explicit, possibly infinite models applies to the order-at-most-four ETP corpus
rather than these identities~\cite{JanotaRawsonSchulz2026}.  Our supplemental
searches also reconstruct four identities in two duality classes previously
reported by ALPS; those four are explicitly excluded from the novelty count
~\cite{ALPSData2026,XieEtAl2026ALPS}.

The published ALPS experiment is separate from our same-resource runs.  Its
eight-configuration ATP portfolio resolves 2.2\% of the 4,141-law pool at its
initial budget, and a twentyfold budget increase adds only 0.6 percentage
points.  At the largest budget it reports 110 implication proofs but only four
construction-side models, all from complete Twee; under ALPS's fixed
reasoning-model protocol, the strongest tested model solves no construction-
side instance~\cite{XieEtAl2026ALPS}.  These results expose a measured shortage
of model-construction methods rather than merely a shortage of proof-search
time.

Our novelty lies in a specific combination: a finite presentation derived
from source-law traces, a functionality proof based on rank-decreasing
inversion, and end-to-end compilation of every reported operation into a
self-contained Lean~4 theorem.  To the best of our literature and
public-artifact audit, no prior system searches this model family or generates
certificates for it.

The contributions are:
\begin{enumerate}
  \item an implementation-independent semantics for finitely presented
    trace-tree magmas, together with a trace-clause synthesis algorithm.  An
    explicit theorem chain starts from the least relational semantics, proves
    the exceptional relation functional by constructor-size induction, and
    obtains a total magma on a countably infinite carrier;
  \item a proof-producing trace search whose depth sweep, orientation schedule,
    source generalizations, and finite resource bounds are explicit, together
    with a Lean~4 compiler producing end-to-end certificates and an explicit
    trust boundary: untrusted enumeration, proof search, and text generation
    cannot create a counted result without kernel acceptance;
  \item the first public infinite models for 28 previously unclassified
    order-five identities, establishing 28 new Austin classifications in
    fourteen duality classes, together with a one-clause structural taxonomy
    and a fresh evaluation on Canonical-4187, the explicitly deduplicated union
    of Order5-130 and ALPS-4141;
  \item three supplementary searches with proof production---guarded decoding,
    strict completion, and canonical-form search---whose different model
    languages and targeted regression scopes are reported separately rather
    than merged into the comparison with Vampire, E, and Twee.
\end{enumerate}

The paper emphasizes the model language, search principles, correctness, and
mathematical structure of the synthesized models.  Runtime and memory are
reported only to establish feasibility.

Section~\ref{sec:problem} fixes the equational setting and duality notation.
Sections~\ref{sec:model-language} and~\ref{sec:search} define the model language
and the search algorithm.  Section~\ref{sec:soundness} gives the soundness
chain and trust boundary, after which Section~\ref{sec:structure} analyzes the
models found.  Section~\ref{sec:evaluation} reports the independent experiments
and the comparison with established ATPs.

\section{Problem setting}
\label{sec:problem}

\subsection{Equational nonimplication and Austin laws}

We follow the notation of the ETP report~\cite{BolanEtAl2025}.  If $X$ is an
alphabet, let $\mathcal M_X=(M_X,\diamond)$ denote the free magma generated by
$X$.  Thus a word in $M_X$ is either a letter of $X$ or a product
$u\diamond v$.  Every map $f:X\to M$ into a magma
$\mathcal M=(M,\diamond)$ extends uniquely to a homomorphism
$\varphi_f:\mathcal M_X\to\mathcal M$.  A law is a formal equality
$s\simeq t$ with $s,t\in M_X$, and
$\mathcal M\models s\simeq t$ means
$\varphi_f(s)=\varphi_f(t)$ for every $f:X\to M$.  For laws $E,F$, we
write $E\models F$ when every magma satisfying $E$ also satisfies $F$.
Following ETP, the order of a law is the total number of occurrences of
$\diamond$ on its two sides.

ETP assigns stable names \Law{n} to laws.  The law \Law{2} is
\(\mathrm{x}\simeq\mathrm{y}\) and therefore holds exactly in singleton
magmas.  A countermodel to
\[
  \Law{n}\Longrightarrow\Law{2}
\]
is consequently a nontrivial model of \Law{n}.  Combining such a model with
an independent result that every finite model of \Law{n} is trivial proves
the following classification.

\begin{definition}
A law $E$ is an \emph{Austin law} if every finite magma satisfying $E$ is
trivial and some infinite magma satisfies $E$.
\end{definition}

The certificate problem addressed here is semantic: given a source law
$E$ and target law $F$, synthesize a magma $\mathcal A=(A,\diamond)$, a proof
$\mathcal A\models E$, and a valuation refuting $F$.  For $F=\Law{2}$,
the last component is simply a pair of distinct carrier elements.

\subsection{Opposite-magma duality}

For a magma $\mathcal M=(M,\diamond)$, let
$\mathcal M^{\mathrm{op}}=(M,\diamond^{\mathrm{op}})$, where
$a\diamond^{\mathrm{op}}b=b\diamond a$.  Reversing the
two children at every internal word node maps a law $E$ to its dual $E^*$,
and
\[
 \mathcal M\models E \quad\Longleftrightarrow\quad
 \mathcal M^{\mathrm{op}}\models E^*.
\]
Finiteness, infinitude, and nontriviality are preserved.  We therefore analyze
the new results as 14 duality classes, although Lean checks a separate
certificate for every one of the 28 identities.

\section{A search language for infinite tree magmas}
\label{sec:model-language}

Three objects play different roles throughout the paper.  First, search
produces a finite rule set $R$.  Second, the rules define two relations:
$\Code_R(a,b;o)$ says that $o$ is an exceptional output allowed for the input
pair $(a,b)$, while $\Step_R(a,b;o)$ says that $o$ is an admissible intermediate
result.  An admissible result may be either the default pair or an exceptional
output.  Third, after proving that $\Code_R$ has at most one output for each
input pair, we define the total operation $\evalop_R$.  The relation $\Step_R$
is deliberately multivalued and is never used as the magma operation.  The
implementation name \texttt{Code} refers to $\Code_R$; it does not mean
generated Lean source code.

\subsection{The constructor-tree carrier}

Let $\Tree$ be the inductive set generated by a constant $\mathsf e$, a unary
constructor $\mathsf k$, and an ordered-pair constructor $\pair{-}{-}$.  Define
\[
 |\mathsf e|=0,\qquad |\mathsf k(a)|=|a|+1,\qquad
 |\pair{a}{b}|=(|a|+1)+(|b|+1).
\]
Both pair components are strictly smaller than their parent.  Constructor
disjointness and injectivity give total structural projections that reveal the
left or right component of a pair.  To make the cardinality statement
explicit, fix an injective pairing function
\(\pi:\mathbb N^2\to\mathbb N\) and define
\[
 \gamma(\mathsf e)=0,\quad
 \gamma(\mathsf k(a))=3\gamma(a)+1,\quad
 \gamma(\pair{a}{b})=3\pi(\gamma(a),\gamma(b))+2.
\]
The residue modulo three identifies the outer constructor, and injectivity of
\(\pi\) then proves by structural induction that \(\gamma\) is injective.
Conversely, $n\mapsto\mathsf k^n(\mathsf e)$ is injective by constructor
injectivity and disjointness.  Hence \(\Tree\) is countably infinite.

The carrier is chosen for three concrete reasons.  A default pair records the
two arguments of a multiplication.  Projections expose those arguments to the
generated uniqueness proof.  Constructor size turns an attempted cyclic tree
equation into an arithmetic contradiction.

\subsection{Formal clause syntax and semantics}
\label{sec:formal-language}

Let \(V=\{v_1,\ldots,v_r\}\) be clause parameters and let
\(H=\{h_1,\ldots,h_m\}\) be disjoint metavariables for intermediate results.
Write \(P(V,H)\) for the free constructor algebra generated by $V\cup H$:
\[
  p ::= v \mid h \mid \mathsf e \mid \mathsf k(p) \mid \pair{p_1}{p_2},
  \qquad v\in V,\ h\in H .
\]
A dependency-respecting trace clause is
\begin{equation}
 \forall V\,H.\quad
 \bigwedge_{i=1}^{m}\Step(p_i,q_i;h_i)
 \Longrightarrow \Code(p_L,p_R;p_O),
 \label{eq:formal-clause}
\end{equation}
The clause must satisfy three dependency conditions.  The patterns
\(p_i,q_i\) may mention only \(V\) and \(\{h_j\mid j<i\}\); the third argument
of the $i$th premise is exactly \(h_i\); and the head may mention all of
\(V\cup H\).  A ground substitution
\(\theta:V\cup H\to\Tree\) extends homomorphically to patterns.
The principal trace generator uses the distinguished source variable as
\(p_O\); the larger syntax is useful for stating the semantic class and for
the related completion procedure.

For $S\subseteq\Tree^3$, let \(\operatorname{Inst}_R(S)\) be the set of all
triples \((\theta p_L,\theta p_R,\theta p_O)\) obtained from a clause of $R$
and a ground substitution \(\theta\) whose instantiated premises all belong to
$S$.  Let
\[
 \operatorname{Raw}=\{(a,b,\pair{a}{b})\mid a,b\in\Tree\}.
\]
On the complete lattice
\(\mathcal P(\Tree^3)\times\mathcal P(\Tree^3)\), define the monotone operator
\begin{equation}
 \Phi_R(C,S)=
 \bigl(C\cup\operatorname{Inst}_R(S),
       S\cup\operatorname{Raw}\cup C\cup\operatorname{Inst}_R(S)\bigr).
 \label{eq:immediate-consequence}
\end{equation}

\begin{definition}[Relational semantics]
\label{def:relational-semantics}
For a finite dependency-respecting rule set $R$, let
\((\Code_R,\Step_R)=\operatorname{lfp}(\Phi_R)\).  Thus \(\Code_R\) is the
ternary exceptional-output relation generated by ground rule instances, and
\(\Step_R\) is the relation of allowed intermediate results generated jointly
by default pairs and exceptional outputs.
\end{definition}

\begin{proposition}[Least semantics]
\label{prop:least-semantics}
Definition~\ref{def:relational-semantics} determines a unique least pair of
relations.  It is the union of the finite iterates of \(\Phi_R\) from
\((\varnothing,\varnothing)\), and it is equivalent to the mutually inductive
default-pair, exceptional-output, and clause-instantiation rules used in the
generated Lean certificate.
\end{proposition}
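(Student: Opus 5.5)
We prove the three claims in order: existence of the least fixed point, its description as a union of iterates, and its agreement with the inductive rules used in Lean.

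\emph{Existence.} The lattice \(\mathcal P(\Tree^3)\times\mathcal P(\Tree^3)\) is complete under componentwise inclusion. \(\Phi_R\) is monotone because \(\operatorname{Inst}_R\) is monotone: enlarging \(S\) can only enlarge the set of substitutions \(\theta\) whose instantiated premises lie in \(S\). Hence Knaster--Tarski gives a unique least fixed point. It is the intersection of all pre-fixed points \((C,S)\) with \(\Phi_R(C,S)\le(C,S)\). Uniqueness of the least element is immediate from antisymmetry.

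\emph{Iterates.} Let \((C_0,S_0)=(\varnothing,\varnothing)\) and \((C_{n+1},S_{n+1})=\Phi_R(C_n,S_n)\). Put \((C_\omega,S_\omega)=\bigcup_n (C_n,S_n)\).
\begin{itemize}
\item By monotonicity and induction, each iterate lies below every pre-fixed point. So \((C_\omega,S_\omega)\le\operatorname{lfp}(\Phi_R)\).
\item For the reverse inclusion we need \(\Phi_R(C_\omega,S_\omega)\le(C_\omega,S_\omega)\), which reduces to continuity of \(\operatorname{Inst}_R\) on directed unions. This is the only step that uses finiteness. Each clause has finitely many premises \(\Step(p_i,q_i;h_i)\), \(i\le m\). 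If \(\theta\) witnesses an instance from \(S_\omega\), each instantiated premise lies in some \(S_{n_i}\). The chain is increasing, so all premises lie in \(S_{\max_i n_i}\). The head therefore belongs to \(C_{\max n_i+1}\) and \(S_{\max n_i+1}\).
\item \(\operatorname{Raw}\subseteq S_1\).
\end{itemize}
Hence \((C_\omega,S_\omega)\) is a fixed point and equals the least one.

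\emph{Agreement with Lean.} The certificate declares mutually inductive predicates \(\Code'\) and \(\Step'\) with three kinds of constructors:
\begin{itemize}
\item a default constructor \(\Step'(a,b;\pair ab)\);
\item a lifting constructor \(\Code'(a,b;o)\Rightarrow\Step'(a,b;o)\);
\item one constructor per clause of \(R\), concluding \(\Code'(\theta p_L,\theta p_R;\theta p_O)\) from \(\Step'(\theta p_i,\theta q_i;\theta h_i)\) for all \(i\).
\end{itemize}
We show two inclusions.
\begin{itemize}
\item \((\Code',\Step')\le\operatorname{lfp}\Phi_R\): this follows by mutual rule induction. We check that \(\operatorname{lfp}\Phi_R\) is closed under each constructor, which is exactly the fixed-point equation read componentwise. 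Note that \(\operatorname{Inst}_R(S)\) appears in both coordinates, so the \(\Step\) component absorbs clause heads, matching the lifting constructor.
\item \(\operatorname{lfp}\Phi_R\le(\Code',\Step')\): this follows from leastness once we show \((\Code',\Step')\) is a pre-fixed point. Each summand of \(\Phi_R(\Code',\Step')\) is produced by a constructor. The \(C\cup\cdot\) and \(S\cup\cdot\) parts are trivial.
\end{itemize}

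The main obstacle is matching syntax exactly. The dependency conditions ensure that patterns in premise \(i\) mention only earlier \(h_j\), and that the third premise argument is exactly \(h_i\). Under these conditions, a ground substitution on \(V\cup H\) corresponds precisely to the universally bound arguments of the Lean constructor. Without the dependency condition the two presentations still coincide, since \(\theta\) is arbitrary. The condition matters only so that the generated Lean binder order is well formed, so we will note it but not depend on it semantically.
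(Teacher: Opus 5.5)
Your proof is correct. For existence and the iterate characterization it follows the paper's argument in more detail: monotonicity of $\Phi_R$, finitary continuity of $\operatorname{Inst}_R$ (your $\max_i n_i$ step) to show that the $\omega$-union is a pre-fixed point, and induction on the iteration index to place every iterate below every pre-fixed point. Your check of $\Phi_R(C_\omega,S_\omega)\le(C_\omega,S_\omega)$ leaves one summand unstated, namely $C_\omega\subseteq S_\omega$. It is immediate from $C_n\subseteq S_{n+1}$.

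You depart from the paper in the agreement with the Lean relations. The paper reuses the iteration induction in both directions. Membership in the $n$th iterate becomes a finite \texttt{raw}/\texttt{hit}/clause derivation, and the conclusion of any finite derivation is shown to lie in some iterate. You instead observe that the pre-fixed points of $\Phi_R$ are exactly the pairs closed under the three kinds of constructors. The extra requirement $\operatorname{Inst}_R(S)\subseteq S$ follows from $\operatorname{Inst}_R(S)\subseteq C$ and $C\subseteq S$. Hence the inductive pair and $\operatorname{lfp}\Phi_R$ are both the least such pair.

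Your route is shorter and is exactly what Lean's induction principle gives. It would also survive for infinitary rules, where the $\omega$-iterate description fails. The paper's route yields more concrete information: every fact of the semantics is the conclusion of a finite derivation, stratified by the stage at which it enters. Your closing remark about the dependency condition is harmless but plays no role in either argument.
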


\begin{proof}
The operator in Eq.~\eqref{eq:immediate-consequence} is monotone.  Its clauses
are finitary, so the union of the ascending \(\omega\)-chain from the bottom is
a fixed point.  Induction on the iteration number proves that it is contained
in every pre-fixed point, hence it is least.  The same induction translates
iterations to finite constructor derivations and conversely.
\end{proof}

When $R$ is fixed, we suppress its subscript on \(\Code_R\), \(\Step_R\),
and \(\evalop_R\).  The two constructors of the generated Lean relation are
called \texttt{raw} and \texttt{hit}; mathematically they express the following
default-pair and exceptional-output rules:
\begin{align}
  \Step(a,b;\pair{a}{b}) & &&\text{for all }a,b, \label{eq:raw-step}\\
  \Code(a,b;o) &\Longrightarrow \Step(a,b;o). \label{eq:hit-step}
\end{align}
Thus $\Step_R$ records admissible intermediate results rather than defining a
partial function.  If \(\Code_R(a,b;o)\) and
\(o\ne\pair{a}{b}\), then the same inputs have at least the two admissible
outputs $o$ and \(\pair{a}{b}\).  No theorem below assumes global
functionality of \(\Step_R\).

\begin{definition}[Well-formed trace-tree presentation]
\label{def:well-formed-presentation}
A finite dependency-respecting rule set $R$ is \emph{semantically
well formed} when its least exceptional relation is output-functional:
\[
 \forall a,b,o,o'\in\Tree.\quad
 \Code_R(a,b;o)\land\Code_R(a,b;o')\Longrightarrow o=o'.
\]
Its exceptional domain is
\(\operatorname{Exc}(R)=\{(a,b)\mid\exists o\,\Code_R(a,b;o)\}\).
The adjective \emph{finitely presented} refers only to finiteness of $R$:
\(\operatorname{Exc}(R)\) may be infinite, or even all of \(\Tree^2\).  No
set-theoretic density claim is intended.
\end{definition}

For a well-formed $R$, define
\begin{equation}
 a\mathbin{\diamond_R}b=\evalop_R(a,b)=
 \begin{cases}
   o,&\text{if }\Code_R(a,b;o)\text{ for some }o,\\
   \pair{a}{b},&\text{if no exceptional output exists.}
 \end{cases}
 \label{eq:eval}
\end{equation}
Functionality makes the first branch independent of witness choice.  The
resulting magma \(\mathcal T_R=(\Tree,\diamond_R)\) is a \emph{trace-tree
magma}.  Lean implements the first branch by classical choice and proves its
independence using the generated functionality theorem.

\begin{proposition}[Total tree operation]
\label{prop:well-defined}
If $R$ is well formed, Eq.~\eqref{eq:eval} defines a total magma on the
countably infinite carrier $\Tree$.  Moreover,
\(\Step_R(a,b;\evalop_R(a,b))\) holds for all $a,b\in\Tree$.
\end{proposition}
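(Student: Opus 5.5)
The plan is to read off each claim from the definitions already in place and from Proposition~\ref{prop:least-semantics}. There are three things to check: that Eq.~\eqref{eq:eval} really defines a function $\Tree\times\Tree\to\Tree$; that the carrier is countably infinite; and that the computed value is always an admissible step.

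\emph{Totality and well-definedness.} Fix $a,b\in\Tree$. The condition ``$\Code_R(a,b;o)$ for some $o$'' either holds or fails; classically these two cases are exhaustive and exclusive. So Eq.~\eqref{eq:eval} assigns a value in every case, and the value lies in $\Tree$: it is either a witness $o\in\Tree$ or the pair $\pair{a}{b}$.

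The only issue is whether the first branch depends on which witness is picked. Suppose $o$ and $o'$ are both witnesses, so $\Code_R(a,b;o)$ and $\Code_R(a,b;o')$. Semantic well-formedness (Definition~\ref{def:well-formed-presentation}) gives $o=o'$. Hence the first branch is independent of the choice, which is exactly the point where the Lean certificate invokes the generated functionality theorem after classical choice. Therefore $\diamond_R$ is a total binary operation.

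\emph{Countable infinitude.} The carrier is the same $\Tree$ as before, so no new argument is needed. The injection $\gamma:\Tree\to\mathbb N$ from the carrier subsection shows $\Tree$ is countable. The injection $n\mapsto\mathsf k^n(\mathsf e)$ shows $\Tree$ is infinite. Neither argument depends on $R$.

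\emph{Admissibility of the computed value.} I would split on the same two cases as Eq.~\eqref{eq:eval}.
\begin{itemize}
\item If some $o$ satisfies $\Code_R(a,b;o)$, then $\evalop_R(a,b)$ is such an $o$. Because $(\Code_R,\Step_R)$ is a fixed point of $\Phi_R$ (Proposition~\ref{prop:least-semantics}), its second component contains $\Code_R$. This is the \texttt{hit} rule, Eq.~\eqref{eq:hit-step}, and it gives $\Step_R(a,b;o)$.
\item Otherwise $\evalop_R(a,b)=\pair{a}{b}$. The second component of $\Phi_R$ contains $\operatorname{Raw}$, so the fixed point satisfies the \texttt{raw} rule, Eq.~\eqref{eq:raw-step}. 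This gives $\Step_R(a,b;\pair{a}{b})$.
\end{itemize}
The only mildly delicate point is this last step: one must use that the least fixed point is actually closed under $\Phi_R$, not merely contained in every pre-fixed point. Proposition~\ref{prop:least-semantics} supplies exactly this. The rest of the proof is bookkeeping, and I expect no real obstacle.
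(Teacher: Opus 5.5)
Your proposal is correct and follows the paper's proof essentially verbatim. Both use the same case split on whether an exceptional output exists, functionality to make the chosen witness irrelevant, the \texttt{raw} and \texttt{hit} rules to supply the admissible step, and the earlier carrier arguments for countable infinitude. Your extra remark, that the least fixed point must itself be closed under $\Phi_R$ to validate Eqs.~\eqref{eq:raw-step} and~\eqref{eq:hit-step}, is a correct detail the paper leaves implicit.
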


\begin{proof}
If no exceptional output exists, the second branch is unique and
Eq.~\eqref{eq:raw-step} supplies the step.  Otherwise all exceptional outputs agree;
the chosen output is therefore well defined and Eq.~\eqref{eq:hit-step}
supplies the step.  Countability and infinitude were proved above.
\end{proof}

The mutually inductive presentation is essential.  A rule may depend on an
earlier admissible computation without deciding whether that computation was
itself default or exceptional.  Soundness therefore cannot be inferred from
one preferred execution trace; it must cover both derivation forms of
\(\Step_R\).

\subsection{Projection--rank certificates}
\label{sec:rank-certificates}

Rank belongs to the proof discipline, not to the semantic definition above.
Intuitively, the certificate is a finite case-analysis graph: it may return to
an earlier proof state only after proving that an explicit constructor-size
measure has decreased.  The following definition makes this condition
independent of Lean syntax.  Fix a presentation $R$.
An \emph{inversion obligation} is a universally closed schematic sequent
$\Gamma\vdash G$.  Its assumptions $\Gamma$ may contain constructor
equalities, $\Code_R$- or $\Step_R$-derivations, and inequalities between
constructor sizes.  Each node of the certificate names one obligation.  An
expansion replaces that obligation by finitely many successors and records a
derivation showing that the successors imply their parent.  This derivation
uses only the first-order theory of free constructors and the usual order on
natural numbers.  Inverting a $\Code_R$ or $\Step_R$ derivation creates one
successor for each applicable constructor.  Projections or constructor clashes
may close a successor immediately.  When inversion exposes an inner
exceptional derivation, the expansion may refer back to an already named
obligation.

A \emph{projection--rank certificate} is a finite directed hypergraph of these
nodes and expansions with the following data:
\begin{enumerate}
  \item every nonrecursive expansion carries a local derivation, using only
    constructor inversion, injectivity, disjointness, projection along a fixed
    path, and valid arithmetic over the size equations;
  \item every leaf has a derivation of its obligation without successors; and
  \item for each recursive strongly connected component there is a fixed
    dimension $d$ and a rank
    \(\mu:Q_{\mathrm{ground}}\to\mathbb N^d\).  Each component of \(\mu\) is
    the size of an explicit constructor pattern in the state's ground
    arguments, and every recursive edge proves
    \(\mu(q')<_{\mathrm{lex}}\mu(q)\).
\end{enumerate}
Here $Q_{\mathrm{ground}}$ denotes the ground instances of the finitely many
schematic obligations.  A recursive edge also records the substitution that
maps the successor schema to the exposed ground subobligation.  Lexicographic
order on \(\mathbb N^d\) is well founded, so recursive back edges unfold to a
well-founded (possibly unbounded-depth) derivation rather than a finite-depth
approximation.  Certificates with no recursive edge have a vacuous rank
component.

The finite hypergraph is a metatheoretic abstraction of the generated proof,
not a separately serialized intermediate representation in the released
artifact.  The released implementation realizes only the scalar fragment
$d=1$: every mutually recursive Lean theorem uses an explicit
natural-valued constructor-size expression, such as $|a|$ or
$|\pair{o}{a}|$, in its \texttt{termination\_by} clause, and emits the
corresponding strict-decrease obligation.  It neither searches for nor emits
lexicographic tuple ranks.  The $\mathbb N^d$ formulation records a sound
extension of the proof discipline; every experimental claim in this paper and
every released certificate instantiates $d=1$.

\begin{proposition}[Certified inversion]
\label{prop:rank-certificate-sound}
Every closed projection--rank certificate proves its root sequent.  In
particular, a certificate rooted at
\[
 \Code_R(a,b;o)\land\Code_R(a,b;o')\Longrightarrow o=o'
\]
proves that $R$ is semantically well formed.
\end{proposition}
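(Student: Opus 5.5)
The plan is to prove, by well-founded induction, that every ground instance of every node obligation holds. The order used is lexicographic on $\mathbb N^d$, combined with the finite structure of the hypergraph.

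\emph{Step 1: fix the setup.} Fix a closed certificate. Call a pair $(q,\sigma)$, with $q$ a node and $\sigma$ a ground substitution, \emph{bad} if the instantiated assumptions $\sigma\Gamma_q$ hold in $\Tree$ under the least semantics of Proposition~\ref{prop:least-semantics} while $\sigma G_q$ fails. We aim to show that no bad pair exists.

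\emph{Step 2: order the nodes.} Condense the hypergraph into its strongly connected components. The condensation is a finite DAG, so we can proceed by induction along a topological order, handling a component only after all components reachable from it. Consider a single component $K$:
\begin{itemize}
\item If $K$ is nonrecursive, each node's expansion carries a local derivation. That derivation is sound in $\Tree$ because $\Tree$ is a free constructor algebra (injectivity, disjointness, projections) and $\mathbb N$ satisfies the arithmetic used. Hence the obligations of the successors, already established, imply the parent instance. Leaves hold directly by clause~2 of the definition.
\item If $K$ is recursive, use its rank $\mu$. Suppose a bad pair in $K$ exists. Choose one, $(q,\sigma)$, with $\mu(q,\sigma)$ lexicographically minimal; this is possible because $<_{\mathrm{lex}}$ on $\mathbb N^d$ is well founded.
\end{itemize}

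\emph{Step 3: inversion expansions.} Inverting a $\Code_R$ or $\Step_R$ hypothesis replaces it by one case per applicable constructor (\texttt{raw}, \texttt{hit}, and clause instantiation). These cases are exhaustive because the least fixed point coincides with the inductive constructor derivations, again by Proposition~\ref{prop:least-semantics}. So some successor instance $(q',\sigma')$ has true assumptions, with $\sigma'$ extending $\sigma$ on the rule's existential witnesses. Since $(q,\sigma)$ is bad, the recorded expansion derivation forces some such successor to be bad as well. There are three possibilities:
\begin{itemize}
\item If $q'$ lies in a later component, it is not bad by the induction hypothesis, a contradiction.
\item If $(q',\sigma')$ is a leaf or is closed by a clash, it is not bad, a contradiction.
\item If the edge is recursive, the recorded substitution identifies $(q',\sigma')$ as a ground instance. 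The edge's proof of $\mu(q')<_{\mathrm{lex}}\mu(q)$, being a valid arithmetic consequence of the size equations that hold in $\sigma$, gives a strictly smaller bad pair. This contradicts minimality.
\end{itemize}

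\emph{Step 4: the main obstacle.} I expect the delicate point to be the rigorous treatment of back edges that introduce fresh ground witnesses from inversion. One must check two things: that the rank decrease is proved relative to the actual ground instance rather than schematically, and that the decrease is valid for every choice of witnesses produced by the case split. I would handle this by stating the minimal-counterexample argument over ground instances from the start, as above. Each edge's local derivation is then a universally quantified implication over all witnesses, so it applies to whichever witnesses the inversion produced.

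\emph{Conclusion.} With no bad pairs, the root sequent holds for all ground instances. For the root $\Code_R(a,b;o)\land\Code_R(a,b;o')\Rightarrow o=o'$, this is exactly output-functionality, so $R$ is semantically well formed in the sense of Definition~\ref{def:well-formed-presentation}.
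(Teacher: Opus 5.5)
Your proposal is correct and follows essentially the same route as the paper. Both arguments use well-founded induction on the rank $\mu$ within each recursive strongly connected component, which you phrase as choosing a lexicographically minimal bad instance. Both rely on local soundness of nonrecursive expansions and closed leaves, and both process the component graph in reverse topological order so the result is uniform over all ground substitutions. Your explicit appeal to Proposition~\ref{prop:least-semantics} for exhaustiveness of constructor inversion, and your ground-level handling of fresh witnesses on back edges, usefully spell out details the paper's proof leaves implicit.
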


\begin{proof}
Within each recursive component, use well-founded induction on \(\mu\).  The
recorded substitution and decrease proof make every recursive successor
available at smaller rank; each nonrecursive expansion is locally sound and
each leaf is closed.  Process the acyclic component graph in reverse
topological order.  The resulting root proof is uniform over all ground
substitutions.
\end{proof}

This definition separates three notions that the implementation must not
conflate: a finite syntactic rule set, its independently defined least
semantics, and a proof that establishes the required semantic property.  In
the artifact that last object is the elaborated Lean proof term; the derivation
hypergraph above is the proof-system account used for the soundness argument.
Failure to synthesize an accepted Lean proof yields \emph{unknown}; it does not
make the rule set mathematically ill defined.

\section{Proof-producing trace search}
\label{sec:search}

The principal procedure targets a law normalized to
\(\mathrm{x}\simeq t\), where \(\mathrm{x}\) is a variable and $t$
is a non-variable term of the free magma.  This includes all 28 new order-five
classifications.  The implementation parses $t$ into an evaluation tree and
performs the following stages.  Its search state contains only the two parsed
laws, a deterministic candidate index, and resource counters; equation numbers
are absent.

\subsection{Trace-mask enumeration}

Let $N$ be the non-root internal nodes of $t$, ordered bottom-up.  A
\emph{trace mask} is a subset $M\subseteq N$.  Nodes in $M$ are called
\emph{active}; the other nodes in $N$ are \emph{inactive}.  Define a symbolic
value $r_M(n)$ recursively.  A leaf retains its variable.  An inactive node is
represented by the free pair of its two child values.  An active node $n$
receives a fresh symbol $h_n$ and contributes the premise
\[
 \Step(r_M(n_L),r_M(n_R);h_n).
\]
At the root, the two symbolic child values become the inputs of an exceptional
clause whose output is $x$.  Thus every mask determines one candidate of the form
Eq.~\eqref{eq:formal-clause}.  Masks are enumerated by increasing number of active
nodes and then deterministically by bit pattern.  This favors small semantic
descriptions, but does not enter the soundness argument.

Write $I_M=N\setminus M$ for all inactive nodes.  A node in $I_M$ is
\emph{outer-visible} if it has no active ancestor; let $V_M\subseteq I_M$ be
this subset.  The distinction matters in the universal proof.  Nodes in $V_M$
are rewritten explicitly while evaluating the root expression.  Inactive
descendants of an active node are not outer-visible, but their values may still
be needed to identify the symbolic inputs of that active node's $\Step$
premise.  The proof generator therefore reasons about every node in $I_M$, not
only $V_M$.

The released implementation exposes every finite search bound.  It accepts
source terms with two through eight internal product nodes and runs the core on
both the given and opposite orientations.  Within one core invocation, the
candidate index $i$ counts masks that survive clause construction, all three
proof builders (for functionality, universal source validity, and a target
witness), and certificate compilation.  The tuned schedule fixes $i=0$ and
instead enlarges the symbolic tableau systematically.  For each depth
\[
 d=0,1,2,3,4,5,
\]
it tries the given orientation and then the opposite orientation, allowing
9.5 seconds for each call.  Thus each call asks for the first fully supported
mask at that tableau depth; increasing $d$ permits longer chains of nested
exceptional-step inversions in absence-of-exception proofs.  This change
alters the finite enumeration schedule, not the clause semantics or acceptance
obligations.

After the depth sweep, a second finite schedule tries at most the first four
deterministic source generalizations in both orientations.  Generalization
replaces either a proper nonempty subset of repeated occurrences or one proper
compound subterm by a fresh variable.  The trace core runs at depth one and the
generated source proof explicitly specializes that variable back to the
original term.  Each generalized call has a three-second local deadline.  A
single 120-second global deadline truncates every local allowance, so late
jobs may receive less time or may not start; any call that finishes early
proceeds immediately.  These bounds restrict coverage only: every returned
candidate must satisfy the same functionality, universal-source, and
target-witness proof requirements.

For example, if the active subcomputation is
$(z\op(x\op x))\op y$, the mask may introduce
\[
 \Step(\pair{z}{\pair{x}{x}},y;h)
\]
and leave the inactive nodes as explicit free pairs.  Search never
specializes $x,y,z$ to ground trees: they remain universally quantified rule
parameters.

\subsection{Functionality and ranking synthesis}

A candidate is discarded unless the system constructs a closed
projection--rank certificate, in the precise sense of
Sec.~\ref{sec:rank-certificates}, for output functionality.  To prove that
derivations $\Code(a,b;o)$ and $\Code(a,b;o')$ have $o=o'$, the generator
combines four kinds of facts:
\begin{enumerate}
  \item constructor projections equate parameters visible at fixed paths;
  \item constructor injectivity propagates those equalities into trace
    premises;
  \item pattern-specific inversion lemmas for admissible-step premises
    recover equal clause parameters after considering the default-pair and
    exceptional-output constructors; these local lemmas do not assert global
    functionality of $\Step$; and
  \item strict size inequalities exclude cyclic or self-embedding matches.
\end{enumerate}

The rank analysis is performed symbolically.  Pair constructors generate
linear inequalities such as $|a|<|\pair{a}{b}|$.  A rule has three distinguished
values---its two inputs and its output---and hence six ordered comparisons
between distinct values.  In the general trace case, the implementation begins
with these six comparisons.  A comparison is retained exactly when, after
splitting every premise into its default and exceptional derivations, the
current retained set proves that comparison in every branch.  Iteration to a
fixed point terminates because the candidate set is finite.  The retained
comparisons become invariants of the mutually recursive inversion states;
specialized shapes use direct projection inequalities.  Each recursive call
is annotated by an explicit natural-valued constructor-size expression, and
the compiler emits the corresponding strict-decrease obligation.  Arithmetic
closes only branches whose accumulated inequalities are inconsistent.  Finite
testing of tree assignments is never accepted as a substitute for
functionality.

\subsection{Universal source-proof synthesis}
\label{sec:tableau}

Even a functional operation may fail the identity away from the trace that
suggested it.  The second filter therefore synthesizes a universal proof of
the source identity rather than replaying one preferred execution.  It builds
a finite symbolic proof tree by repeatedly splitting the constructors of
$\Step$ and $\Code$ derivations.  This proof tree is a certificate-construction
device, not an additional semantic object.  For any pair of symbolic inputs,
define the \emph{absence-of-exception predicate}
\begin{equation}
  \NoExc_R(a,b)\quad:\!\Longleftrightarrow\quad
  \neg\exists o.\,\Code_R(a,b;o).
\end{equation}
A formula $\NoExc_R(a,b)$ is an \emph{obligation} while the generator is
trying to prove it and a \emph{theorem} only after Lean has checked the
generated proof.  Its exact operational consequence is
$\evalop_R(a,b)=\pair{a}{b}$ by Eq.~\eqref{eq:eval}: no exceptional rule
applies, so the product takes the default-pair branch.

For every inactive node $n\in I_M$, with symbolic children $a_n,b_n$, the
source-proof generator constructs the obligation
\[
  \NoExc_R(a_n,b_n).
\]
It proves this formula in the context of the $\Step$ facts on which $a_n,b_n$
depend.  To do so, the generator considers both constructors of every such
fact: the default-pair
case in Eq.~\eqref{eq:raw-step} and the exceptional-output case in
Eq.~\eqref{eq:hit-step}.  In the latter case, it inverts the applicable
$\Code$ clause, uses constructor projections and unification to expose its
parameters, and may recursively analyze another exceptional-output premise.
The proof therefore does not assume that an active node took the default
branch.

A proof branch is discharged only by a contradiction that the compiler can
express and Lean can check:
\begin{enumerate}
  \item constructor projections make the required exceptional pattern
    impossible;
  \item constructor disjointness or injectivity yields a contradiction;
  \item unification would require a finite tree to contain itself; or
  \item accumulated rank inequalities form a strict cycle.
\end{enumerate}
If a branch cannot be discharged, the current mask is rejected and enumeration
continues.  For an active node, the checked predicates for its inactive
descendants identify the node's actual child values with the symbolic inputs
of its $\Step$ premise; Prop.~\ref{prop:well-defined} then supplies that
premise for the actual product.  After active subtrees have been summarized by
their fresh symbols, the predicates for the outer-visible nodes $V_M$ rewrite
the remaining products to free pairs.  The root $\Code$ clause finally rewrites
the whole right-hand side to $x$.  The generated Lean proof consists of the
corresponding case splits, equalities, projection lemmas, and arithmetic
arguments.  It covers every admissible-step case required by the trace; it is
not random testing or bounded enumeration of carrier elements.

\subsection{Witness synthesis and certificate compilation}

For target \Law{2}, the fixed witness is
$\mathsf e\ne\mathsf k(\mathsf e)$.  General targets can instead be evaluated
symbolically under a synthesized valuation.  The compiler emits the tree
datatype, $\Code$ and $\Step$, the functionality theorem, the total operation,
the universal source proof, and the target refutation required by the official
\texttt{Goal} interface.  That interface has no separate infinitude field:
infinitude of the emitted trace carrier is supplied uniformly by
Lemma~\ref{lem:carrier-infinite}.  If a proof pattern is unsupported, the
compiler returns failure rather than emitting a candidate.  The protocol
wrapper likewise rejects empty output and output beyond the submission-size
limit.  The official Judge independently enforces its proof policy and
determines semantic acceptance in Lean.

Internally, only equality chains use a typed intermediate representation.  The
released class \texttt{\_v13\_EqProof} stores the symbolic endpoints of each
chain together with its Lean text.  Before composing that text, the generator
checks the endpoints of every use of reflexivity, symmetry, transitivity,
congruence, or constructor projection.  An inconsistent chain is therefore
rejected in Python.  Functionality inversions, symbolic source-proof branches,
and target normalization are emitted as structured Lean fragments, with
separate Python checks for unsupported constructions.  The implementation does
not serialize one common proof graph for all of these tasks.  Constructor case
splits bind the actual fields of the selected $\Code$ or $\Step$ constructor
instead of asking Lean's elaborator to infer unnamed witnesses.  Both internal
representations remain untrusted: only elaboration and kernel checking of the
complete Lean theorem turn a candidate into evidence.

These failure paths affect only search completeness.  If a source-proof
branch remains unresolved, the current mask is rejected and the mask loop
continues.  If target-witness synthesis finds no separating valuation, the
current mask is likewise rejected.  After a ground target witness has been
found, failure to generate a $\NoExc_R$ proof for one of its default products
also rejects only the current mask, and enumeration proceeds to later masks.
No such failure can accept an unsupported model.

Figure~\ref{fig:trace-algorithm} gives the core search independently of
benchmark identifiers.  The implementation receives only parsed source and
target terms; equation numbers are not part of its search state or candidate
grammar.

\begin{figure}[!t]
\centering
\begin{minipage}{0.93\linewidth}
\small
\textbf{procedure} $\mathsf{TraceCore}(\mathrm{x}\simeq t,F,i,d,D)$\\
\quad $j\leftarrow0$\\
\quad \textbf{for} $M\subseteq N(t)\setminus\{\mathrm{root}\}$ in
      increasing $(|M|,\mathrm{bits}(M))$, while before $D$, \textbf{do}\\
\qquad $(R_M,I_M,V_M)\leftarrow\mathsf{BuildTraceClause}(t,M)$;
      \textbf{handle reject/abort as below}\\
\qquad $P_{\mathrm{fun}}\leftarrow
      \mathsf{EmitFunctionalityProof}(R_M)$;
      \textbf{handle reject/abort as below}\\
\qquad $P_{\mathrm{src}}\leftarrow
      \mathsf{EmitSourceProof}(\mathrm{x}\simeq t,R_M,I_M,V_M,d)$;
      \textbf{if an unresolved branch remains then reject-mask}\\
\qquad $(\nu,W_{\mathrm{tgt}})\leftarrow
      \mathsf{SynthesizeTargetWitness}(F,R_M)$;
      \textbf{if none then reject-mask}\\
\qquad $P_{\mathrm{tgt}}\leftarrow
      \mathsf{EmitTargetNormalization}(F,R_M,\nu,W_{\mathrm{tgt}})$;
      \textbf{if unsupported then reject-mask}\\
\qquad $L\leftarrow\mathsf{CompileLean}
      (R_M,P_{\mathrm{fun}},P_{\mathrm{src}},
       \nu,P_{\mathrm{tgt}})$;
      \textbf{handle reject/abort as below}\\
\qquad \textbf{if} $j<i$ \textbf{then} $j\leftarrow j+1$;
      \textbf{continue}\\
\qquad \textbf{return} $L$\\
\quad \textbf{return} \emph{failure}\\[2pt]
\textbf{where} \emph{reject-mask} continues the mask loop, whereas
      \emph{abort-job} returns \emph{failure} from $\mathsf{TraceCore}$\\[2pt]
\textbf{procedure} $\mathsf{TraceSynth}(E,F,B)$\\
\quad $D_B\leftarrow\mathsf{now}+B$\\
\quad \textbf{for each} job $J=(E_J,F_J,i,d,\delta_J,\tau_J)$ in the fixed
      schedule while $\mathsf{now}<D_B$ \textbf{do}\\
\qquad $D\leftarrow\min(D_B,\mathsf{now}+\delta_J)$\\
\qquad $L\leftarrow\mathsf{TraceCore}(E_J,F_J,i,d,D)$;
      \textbf{if} failure \textbf{then continue}\\
\qquad $L\leftarrow\tau_J(L)$\\
\qquad \textbf{if} $L$ passes the postprocessing syntax and byte-size checks
      \textbf{then return} $L$\\
\quad \textbf{return} \emph{unknown}
\end{minipage}
\caption{Core trace-model synthesis.  Here $I_M$ is the set of inactive nodes
and $V_M\subseteq I_M$ is the subset with no active ancestor.  A scheduled job
fixes an orientation or source generalization, candidate rank, proof depth,
local allowance $\delta_J$, and proof-preserving conversion $\tau_J$.  The
released depth sweep uses rank zero.  The two failure modes reproduce the
implementation: \emph{reject mask} continues enumeration, whereas
\emph{abort job} ends the current core invocation.  The returned Lean theorem
is the certificate; finite bounds affect completeness only.}
\label{fig:trace-algorithm}
\end{figure}

The algorithm maintains three acceptance obligations.  First, the generated
functionality theorem proves output uniqueness of the least relation
$\Code_{R_M}$ by constructor projection and scalar size descent.  Second, the
source theorem resolves every default-pair/exceptional-output case needed to interpret the law under
an arbitrary valuation.  Third, the target theorem evaluates both sides under
the same explicit valuation and proves distinct constructor normal forms.
The compiler may represent an obligation by a typed equality record or by a
direct Lean fragment, but it never infers validity from the mask or from a
finite test set.  The wrapper returns a model only after assembling all three
obligations into one checkable theorem.  Conversely, mask order, conservative
early termination of a core call, schedule truncation, symbolic-proof depth bounds,
and budget $B$ can omit
a valid presentation.  Returning \emph{unknown} therefore means only that no
accepted certificate was found in the explored finite search.

\subsection{Supplementary proof-producing model searches}
\label{sec:related-searches}

Three additional searches reuse parts of the constructor vocabulary and
certificate infrastructure but enumerate different mathematical
presentations.  They are not trace-mask ablations.  Their role in this paper is
supplementary: the tuned trace search receives the full Canonical-4187 union, whereas
guarded decoding, completion, and CNF are rerun on fixed historical-hit sets to
check that their optimized implementations still regenerate accepted
certificates.

\emph{Guarded decoding} enumerates a fixed grammar of total decoders on
$\mathsf e$, unary constructors, and selected pair patterns.  Multiplication
either invokes the decoder under a syntactic guard or returns a free pair.
The tuned implementation observes that its Lean proof contract forces all
seven decoder choices and therefore tries that unique admissible program
directly, in both orientations, rather than enumerating programs that the same
contract must later reject.  Shallow and depth-two tree evaluations remain
search filters; the compiler proves the resulting source identity by complete
constructor cases.  Thus the optimization changes enumeration cost, not the
accepted decoder semantics.  Section~\ref{sec:guarded-model} gives the model
for the known dual pair.

\emph{Strict completion} starts from oriented consequences of the source,
using an occurs-checked unifier and a Knuth--Bendix-style order to retain only
variable-safe, decreasing root rules.  It computes critical peaks, requires
joinability when using the confluence route, proves that every rule maps normal
inputs to a normal output, and constructs a symbolic proof of the source over
every possible choice between a matching rule and the default pair.  Selected
variants use that exhaustive proof directly rather than relying on confluence.
Like trace search, the resulting operation changes the default pairing only
at roots matched by finitely many rules, but completion derives its candidates
from critical-pair reasoning rather than trace masks.  Direct and opposite
orientations are searched explicitly.  A bounded algebraic prefix check can
discard an unpromising seed, but never serves as acceptance evidence: every
emitted candidate still requires a complete symbolic source proof and a Lean
certificate.  The main tuned implementation uses a hybrid compiler for the
same completed-rule semantics: it selects the regular proof representation for
small rule systems whose indexed proof graph contains an occurs-cycle leaf,
and the indexed representation otherwise.  The two supplemental completion
profiles used in Section~\ref{sec:evaluation} change search scheduling or proof
representation only; they do not define new model semantics.

\subsection{Canonical-form search}
\label{sec:cnf-search}

The fourth standalone program uses a narrower grammar of canonical forms.  We
retain its implementation name, \emph{CNF search}, where CNF abbreviates
\emph{canonical normal form} and is unrelated to conjunctive normal form.
Candidate seeds come from diagonal simplification, abstraction of a compound
shared by the two source sides, or direct orientation of the source.
A bounded critical-pair completion, capped at 16 rules and 16 rounds, may
extend a seed; completed systems are tried before uncompleted seeds.  A rule is
admissible only when its left-hand side has a pair at the root, its variables
are closed, and it decreases the selected order.  This pair-root condition is
enforced at orientation, seed construction, rule admission, tableau,
witness, and compilation boundaries.  It is a soundness restriction: without
it, the Python evaluator and Lean normal-form carrier can disagree about where
a rule may fire.  A retained system must also send normal inputs to normal
outputs.  Its source checker shares repeated subterms in an evaluation directed
acyclic graph and considers, at every product node, each matching root rule and
the default-pair case.
Occurs-checked unification detects cyclic substitutions; negative constraints
record when a rule pattern does not match; and normal-form closure is used to
resolve every proof branch.  Any unresolved branch rejects the system.
Target-witness search and a dedicated Lean compiler then produce the candidate.
For a retained rule system $R$, write
$\Tree_{\mathsf{NF}}(R)$ for the constructor trees containing no $R$-redex at
any subtree; output safety makes this carrier closed under the synthesized
operation.

Semantically, CNF search also defines a free-pair operation modified at roots
matched by finitely many rules.  Algorithmically it has its own seed grammar,
completion loop, symbolic source proof, compiler, and 120-second experimental
budget.  We therefore report it separately rather than folding its targeted
hits into the primary trace comparison.

\section{Correctness and the Lean trust boundary}
\label{sec:soundness}

The mathematical checker embedded in the search and the final Lean checker
serve different roles.  The former efficiently rejects bad candidates and
constructs proof terms; the latter is the authority for every reported model.

\subsection{Soundness as a theorem chain}

Fix a dependency-respecting presentation $R$, and interpret $\Code$ and
$\Step$ by the least simultaneous semantics of
Definition~\ref{def:relational-semantics}.  Write
\[
 \mathsf{Fun}(\Code)\;:\!\!\Longleftrightarrow\;
 \forall a,b,o,o'\in\Tree.\quad
 \Code(a,b;o)\wedge\Code(a,b;o')\Longrightarrow o=o'.
\]
The following statements separate semantic assumptions from the algorithm
that constructs their certificates.

\begin{lemma}[Functionality from certified inversion]
\label{lem:code-functional}
If a closed projection--rank certificate has root sequent
$\Code(a,b;o)\wedge\Code(a,b;o')\Rightarrow o=o'$, then
$\mathsf{Fun}(\Code)$.
\end{lemma}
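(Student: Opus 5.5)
The lemma is essentially the specialization of Proposition~\ref{prop:rank-certificate-sound} to the functionality sequent, so the plan is to reduce to that proposition and check that its conclusion, read under the least semantics, is exactly $\mathsf{Fun}(\Code)$. First, I fix the semantics. By Proposition~\ref{prop:least-semantics}, $(\Code,\Step)$ is the least fixed point of $\Phi_R$. Every $\Code$- or $\Step$-fact therefore has a finite derivation built from the \texttt{raw}, \texttt{hit}, and clause-instantiation rules. This is the justification for each inversion step in the certificate: inverting a derivation yields one successor per applicable constructor, and no other cases exist. Leastness is used precisely here. A larger fixed point could contain underivable facts that escape the case analysis.

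Second, I invoke Proposition~\ref{prop:rank-certificate-sound} on the given closed certificate. The root obligation is the universally closed schematic sequent $\Code(a,b;o)\wedge\Code(a,b;o')\vdash o=o'$. Concretely, I argue by well-founded induction on the rank $\mu$ within each recursive strongly connected component, and process the components in reverse topological order. For each ground instance of a node, three cases arise:
\begin{enumerate}
  \item Nonrecursive expansions are sound because their local derivations use only free-constructor theory and arithmetic, both of which hold in $\Tree$ with the size function $|\cdot|$.
  \item Leaves are closed outright.
  \item Recursive edges apply the induction hypothesis to the recorded substituted subobligation, whose rank is strictly smaller by the recorded decrease proof.
\end{enumerate}
This yields the root sequent for every ground substitution.

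Third, I universally generalize over $a,b,o,o'\in\Tree$. Since the root is closed over exactly these variables, the conclusion is literally $\mathsf{Fun}(\Code)$.

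The main obstacle is the justification for the recursive edges. An exposed inner $\Code$-derivation must be a genuine ground instance of an earlier schema, with the correct substitution and strictly smaller $\mu$. Otherwise the back edge would be circular. I expect to handle this by noting that, in the released $d=1$ fragment, $\mu$ is a single constructor-size expression, so well-foundedness reduces to that of $<$ on $\mathbb N$. Alternatively, one can induct on derivation height from Proposition~\ref{prop:least-semantics} if that proves cleaner.
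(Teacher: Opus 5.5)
Your proposal is correct and takes the same route as the paper, which simply applies Proposition~\ref{prop:rank-certificate-sound} uniformly to every ground quadruple $a,b,o,o'$; it also notes that only the least relation $\Code$ is shown functional, while the multivalued relation $\Step$ is not. Your re-derivation of that proposition and your appeal to leastness for inversion are sound additions, but the $d=1$ restriction in your last paragraph is unnecessary: lexicographic order on $\mathbb N^d$ is already well founded, and the recorded substitution and decrease proof are part of what makes the certificate closed.
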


\begin{proof}
Apply Proposition~\ref{prop:rank-certificate-sound} uniformly to every ground
quadruple $a,b,o,o'$.  Notice that the argument proves functionality of the
least exceptional relation; it does not assume, and generally cannot prove,
functionality of the multivalued relation $\Step$.
\end{proof}

\begin{lemma}[Evaluation equations and step adequacy]
\label{lem:eval-laws}
Assume $\mathsf{Fun}(\Code)$.  For all $a,b,o\in\Tree$:
\begin{enumerate}
  \item $\Code(a,b;o)$ implies $\evalop(a,b)=o$;
  \item $\neg\exists q.\,\Code(a,b;q)$ implies
        $\evalop(a,b)=\pair{a}{b}$; and
  \item $\Step(a,b;\evalop(a,b))$.
\end{enumerate}
\end{lemma}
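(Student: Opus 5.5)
The plan is to unfold the definition of $\evalop$ in Eq.~\eqref{eq:eval} and split on whether an exceptional output exists for the input pair $(a,b)$. This is a classical case distinction, and in Lean it is the \texttt{if}-\texttt{then}-\texttt{else} on $\exists o.\,\Code(a,b;o)$ with a choice witness in the first branch. I will prove the three clauses in order, since the third is just a combination of the first two with the two constructors of $\Step$.

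For clause~1, assume $\Code(a,b;o)$. The existential then holds, so $\evalop(a,b)$ reduces to the chosen witness $o^\ast$, which satisfies $\Code(a,b;o^\ast)$ by the specification of choice. Applying $\mathsf{Fun}(\Code)$ to the two derivations $\Code(a,b;o)$ and $\Code(a,b;o^\ast)$ gives $o^\ast=o$. For clause~2, assume $\neg\exists q.\,\Code(a,b;q)$. The definition then takes its second branch by rewriting with the negative condition, and we obtain $\evalop(a,b)=\pair{a}{b}$ directly.

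For clause~3, I split once more on excluded middle for $\exists q.\,\Code(a,b;q)$.
\begin{itemize}
\item If a witness $q$ exists, clause~1 gives $\evalop(a,b)=q$. Rule~\eqref{eq:hit-step} turns $\Code(a,b;q)$ into $\Step(a,b;q)$, and rewriting yields $\Step(a,b;\evalop(a,b))$.
\item If no witness exists, clause~2 gives $\evalop(a,b)=\pair{a}{b}$, and rule~\eqref{eq:raw-step} supplies $\Step(a,b;\pair{a}{b})$.
\end{itemize}
Both rules hold in the least semantics because $(\Code,\Step)$ is a fixed point of $\Phi_R$ (Proposition~\ref{prop:least-semantics}): $\operatorname{Raw}\subseteq\Step$ and $\Code\subseteq\Step$. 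This is essentially Proposition~\ref{prop:well-defined}, restated.

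There is no real obstacle. The only delicate point is clause~1: the value of $\evalop$ is a choice witness, not the given $o$, so functionality is genuinely needed to identify them. Functionality is used only for $\Code$; nothing is claimed about uniqueness of $\Step$ outputs.
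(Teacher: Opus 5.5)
Your proposal is correct and follows the paper's argument. You split on whether an exceptional output exists, use $\mathsf{Fun}(\Code)$ to identify the choice witness with $o$, take the default branch otherwise, and derive the third item from rules \eqref{eq:hit-step} and \eqref{eq:raw-step}; your added justification that $\operatorname{Raw}\subseteq\Step$ and $\Code\subseteq\Step$ hold at the fixed point of $\Phi_R$ is a correct detail the paper leaves implicit.
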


\begin{proof}
In the exceptional case, functionality equates the witness selected in
Eq.~\eqref{eq:eval} with $o$; in the nonexceptional case the second branch of
that definition applies.  The exceptional-output and default-pair rules,
respectively, establish the third item.
\end{proof}

\begin{lemma}[Infinite carrier]
\label{lem:carrier-infinite}
The map $n\mapsto\mathsf k^n(\mathsf e)$ embeds $\mathbb N$ into $\Tree$.
Consequently every full-tree trace or guarded presentation has a countably
infinite, hence nontrivial, carrier.
\end{lemma}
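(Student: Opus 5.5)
The plan is to prove injectivity of $\iota(n)=\mathsf k^n(\mathsf e)$ directly from the constructor axioms of the free inductive type $\Tree$, and then read off the cardinality consequences. I would define $\iota$ by recursion, $\iota(0)=\mathsf e$ and $\iota(n+1)=\mathsf k(\iota(n))$. Injectivity follows by induction on $m$, proving $\forall n.\ \iota(m)=\iota(n)\Rightarrow m=n$.

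For $m=0$, split on $n$. If $n=0$ we are done. If $n=n'+1$, the equation $\mathsf e=\mathsf k(\iota(n'))$ contradicts constructor disjointness. The case $m=m'+1$ with $n=0$ is symmetric. For $m=m'+1$ and $n=n'+1$, constructor injectivity of $\mathsf k$ turns $\mathsf k(\iota(m'))=\mathsf k(\iota(n'))$ into $\iota(m')=\iota(n')$. The induction hypothesis then gives $m'=n'$, hence $m=n$.

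As a cross-check that avoids the induction, one can note that $|\iota(n)|=n$ by the size definition, so the size function is a left inverse of $\iota$. I would use whichever argument matches the Lean encoding more directly; the size argument is a one-line \texttt{congrArg}.

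For the consequence, $\Tree$ has an injection from $\mathbb N$. The injection $\gamma:\Tree\to\mathbb N$ constructed in the carrier subsection supplies the reverse direction, so $\Tree$ is countably infinite. In particular $\mathsf e\neq\mathsf k(\mathsf e)$, so the carrier is nontrivial.

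The key observation is that trace presentations, and the guarded presentations of Section~\ref{sec:related-searches}, all use the same full carrier $\Tree$. By Proposition~\ref{prop:well-defined}, the operation $\evalop_R$ is defined on all of $\Tree$ regardless of $R$, so the cardinality statement does not depend on the rules at all. I would state this explicitly: the lemma concerns the carrier, not the operation. This also explains why the \texttt{Goal} interface needs no separate infinitude field.

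The only point needing care is the scope of ``every presentation''. CNF search uses the subcarrier $\Tree_{\mathsf{NF}}(R)$, which could in principle fail to contain the chain $\mathsf k^n(\mathsf e)$. I would therefore restrict the claim to full-tree carriers, as the statement's wording ``full-tree'' already does. The main obstacle is this bookkeeping rather than any mathematics.
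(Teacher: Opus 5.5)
Your proposal is correct and follows the paper's argument. The paper also gets injectivity of $n\mapsto\mathsf k^n(\mathsf e)$ from constructor injectivity and disjointness, and uses the encoding $\gamma$ for the countability bound; your size left-inverse $|\mathsf k^n(\mathsf e)|=n$ is an equally valid shortcut, and your caution about $\Tree_{\mathsf{NF}}(R)$ is harmless but unnecessary, since the pair-root restriction on rules makes every tower element normal, as Proposition~\ref{prop:nf-soundness} uses.
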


\begin{proof}
Constructor injectivity and disjointness make the unary tower injective; the
encoding $\gamma$ from Section~\ref{sec:model-language} gives the converse
countability bound.
\end{proof}

\begin{lemma}[Universal trace-proof soundness]
\label{lem:tableau}
Let a mask for the law $\mathrm{x}\simeq t$ determine a well-formed rule set
$R$, the inactive-node set $I_M$, and the outer-visible subset $V_M$.  Suppose
the symbolic proof establishes $\NoExc_R(a_n,b_n)$ for every $n\in I_M$ under
both the default-pair and exceptional-output cases of every $\Step$ premise on
which $a_n,b_n$ depend.  Then
$\mathcal T_R\models\mathrm{x}\simeq t$.
\end{lemma}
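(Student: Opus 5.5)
We fix an arbitrary valuation $f:X\to\Tree$ and show $\varphi_f(t)=f(\mathrm x)$ in $\mathcal T_R$; since $f$ is arbitrary, this gives $\mathcal T_R\models\mathrm{x}\simeq t$. The plan is to build a ground substitution $\theta$ for the root clause of $R$ by structural induction over the evaluation tree of $t$. We process the non-root nodes bottom-up, in the same order used to define $r_M$. At each node we keep the invariant
\[
  \varphi_f(n)=\theta(r_M(n)),
\]
where $\theta$ sends each source variable $v$ to $f(v)$ and each fresh symbol $h_n$ to the actual product $\varphi_f(n)$. Under this invariant every premise $\Step(p_i,q_i;h_i)$ of the root clause, instantiated by $\theta$, holds in the least semantics. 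Proposition~\ref{prop:least-semantics} together with the Lean constructors then yields $\Code_R(\theta r_M(n_L),\theta r_M(n_R);f(\mathrm x))$ at the root. Finally, Lemma~\ref{lem:eval-laws}(1) and functionality (Lemma~\ref{lem:code-functional}) give $\evalop_R$ of the root children $=f(\mathrm x)$, which is the required equation.

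The induction runs as follows. Leaves hold trivially. For an inactive node $n\in I_M$ with children $n_L,n_R$, the induction hypothesis gives actual child values $\theta a_n,\theta b_n$. All $\Step$ premises on which $a_n,b_n$ depend are already true facts about the chosen $\theta$. The hypothesis of the lemma proves $\NoExc_R(a_n,b_n)$ uniformly over both derivation forms of those premises, that is, for whichever constructor actually realizes them. It therefore applies to the ground instance, and Lemma~\ref{lem:eval-laws}(2) gives $\varphi_f(n)=\pair{\theta a_n}{\theta b_n}=\theta(r_M(n))$. For an active node $n\in M$, we simply set $\theta(h_n):=\varphi_f(n)=\evalop_R(\theta r_M(n_L),\theta r_M(n_R))$. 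Lemma~\ref{lem:eval-laws}(3) then supplies the premise $\Step(\theta r_M(n_L),\theta r_M(n_R);\theta h_n)$. The dependency conditions of Eq.~\eqref{eq:formal-clause} ensure that $\theta$ is defined on every symbol before it is used.

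The main obstacle is the inactive-descendant case. The $\NoExc$ obligations were proved symbolically, quantifying over arbitrary $\Step$ facts rather than over the specific value $\evalop_R$. We must argue that a universally quantified proof over both \texttt{raw} and \texttt{hit} derivations specializes soundly to the actual derivation produced by Lemma~\ref{lem:eval-laws}(3). This needs the symbolic obligation to be read as a universally closed sequent whose hypotheses are exactly the instantiated premises. We would make it precise by treating each obligation as a ground instance of the schematic sequent, exactly as in Proposition~\ref{prop:rank-certificate-sound}. We would also check that outer-visible nodes $V_M$ and inactive nodes lying under active ones are handled by the same argument. The only difference is where their resulting pairs are consumed: in the root rewriting for $V_M$, and in matching a $\Step$ premise's input patterns otherwise.
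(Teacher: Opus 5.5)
Your proposal is correct and follows essentially the paper's proof. You fix a valuation and evaluate bottom-up. At each inactive node you use its $\NoExc_R$ fact with Lemma~\ref{lem:eval-laws}(2) to identify the actual product with the free pair. Lemma~\ref{lem:eval-laws}(3) discharges each active node's $\Step$ premise, and the root $\Code$ clause then fires via Lemma~\ref{lem:eval-laws}(1). Your explicit substitution $\theta$ and invariant $\varphi_f(n)=\theta(r_M(n))$ make precise what the paper states informally. One small correction: since $R$ is assumed well formed, functionality of $\Code_R$ is a hypothesis here, so Lemma~\ref{lem:code-functional} is not needed.
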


\begin{proof}
Fix an arbitrary valuation $f:X\to\Tree$ and evaluate $t$ from the leaves
upward.  At every inactive node, its $\NoExc_R$ theorem and
Eq.~\eqref{eq:eval} identify the actual product with the free pair used in the
symbolic trace.  This has two distinct uses.  First, inactive descendants of
an active node identify its actual child values with the symbolic inputs in
that node's $\Step$ premise.  Lemma~\ref{lem:eval-laws} then supplies the
premise for the active node's actual output, whether that output is the default
pair or an exceptional value.  Continuing upward establishes every active-node
premise and instantiates its fresh result symbol.  Second, once active subtrees
have been summarized in this way, the $\NoExc_R$ theorems for the
outer-visible nodes $V_M$ rewrite the remaining products above them to their
symbolic free pairs.  The two root inputs now match the instantiated head of
the generated $\Code$ clause, whose output is $f(\mathrm{x})$.
Lemma~\ref{lem:eval-laws} therefore identifies the root product with
$f(\mathrm{x})$.  Since $f$ was arbitrary, the law holds.
\end{proof}

\begin{lemma}[Countervaluation soundness]
\label{lem:countervaluation}
Let $F$ be $u\simeq v$.  If, under one valuation $\nu$, symbolic evaluation
derives $\varphi_\nu(u)=a$, $\varphi_\nu(v)=b$, and $a\ne b$, then
$\mathcal T_R\not\models F$.
\end{lemma}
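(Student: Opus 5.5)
The statement is essentially the definition of nonimplication, once the symbolic evaluation has been interpreted soundly in $\mathcal T_R$. The plan is to make precise what ``symbolic evaluation derives $\varphi_\nu(u)=a$'' means. Each evaluation step at a product node with already-computed child values $c,d$ is justified in exactly one of two ways. Either a checked $\NoExc_R(c,d)$ theorem licenses replacing the product by $\pair{c}{d}$, or a checked derivation of $\Code_R(c,d;o)$ licenses replacing it by $o$.

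First I would prove, by structural induction on a term $w$ of the free magma, that every symbolic derivation yielding value $c$ for $w$ under $\nu$ satisfies $\varphi_\nu(w)=c$ in $\mathcal T_R$.
\begin{itemize}
  \item For a variable leaf, both sides equal $\nu$ of that variable.
  \item For a product $w=w_1\op w_2$, the induction hypothesis gives $\varphi_\nu(w_1)=c_1$ and $\varphi_\nu(w_2)=c_2$. Since $\varphi_\nu$ is a homomorphism, $\varphi_\nu(w)=\evalop_R(c_1,c_2)$.
  \item In the default case, item~2 of Lemma~\ref{lem:eval-laws} turns the $\NoExc_R$ theorem into $\evalop_R(c_1,c_2)=\pair{c_1}{c_2}$.
  \item In the exceptional case, item~1 of the same lemma, which uses $\mathsf{Fun}(\Code)$ obtained from Lemma~\ref{lem:code-functional}, gives $\evalop_R(c_1,c_2)=o$.
\end{itemize}
Well-formedness of $R$ guarantees that $\evalop_R$ is a total operation (Proposition~\ref{prop:well-defined}), so $\varphi_\nu$ is defined on all terms.

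Applying this to $u$ and $v$ gives $\varphi_\nu(u)=a$ and $\varphi_\nu(v)=b$. With $a\ne b$ (for ground constructor normal forms, this is decided by constructor disjointness and injectivity), the valuation $\nu$ witnesses $\varphi_\nu(u)\ne\varphi_\nu(v)$. By the definition of $\models$ in Section~\ref{sec:problem}, this gives $\mathcal T_R\not\models u\simeq v$. For the target \Law{2}, the step is immediate with $a=\mathsf e$ and $b=\mathsf k(\mathsf e)$.

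The only real obstacle is the bookkeeping that ties the symbolic trace to the actual operation. In particular, the default-pair case must rest on a proved $\NoExc_R$ fact and not merely on the failure to find an exceptional clause. This is exactly why the induction is stated over justified derivations rather than over a replayed Python evaluation.
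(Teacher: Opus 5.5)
Your proposal is correct and follows the paper's argument. An equality $\varphi_\nu(u)=\varphi_\nu(v)$ under the single valuation $\nu$ would force $a=b$, contradicting the constructor inequality. The paper takes the symbolic derivations directly as equalities in $\mathcal T_R$; you additionally justify them by induction on terms via Lemma~\ref{lem:eval-laws}, which is a harmless elaboration. Note that $\mathsf{Fun}(\Code)$ there is just the well-formedness already presupposed by $\mathcal T_R$, so Lemma~\ref{lem:code-functional} is not needed.
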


\begin{proof}
The two evaluation derivations interpret the same operation and valuation.
Substituting them into a hypothetical equality for $F$ gives $a=b$, contrary
to the constructor inequality.
\end{proof}

\begin{lemma}[Soundness of scheduled transports]
\label{lem:transport}
Every transport $\tau_J$ used by the fixed schedule preserves a certified
nonimplication and the cardinality of its carrier.
\end{lemma}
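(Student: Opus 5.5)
The plan is to enumerate the transports $\tau_J$ that the fixed schedule actually uses and check each one separately. From Section~\ref{sec:search} there are three kinds: the identity transport for given-orientation jobs, the opposite-magma transport for jobs run on the dual pair $(E^*,F^*)$, and the specialization transport for jobs run on a source generalization. For each kind we show two things. The first is that the transported certificate still contains a magma satisfying the original source $E$ and refuting the original target $F$. The second is that the carrier set is literally unchanged, so countable infinitude, and hence nontriviality by Lemma~\ref{lem:carrier-infinite}, carries over without further argument.

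\emph{Identity and orientation.} The identity case is trivial. For the opposite orientation, suppose the core returns $\mathcal T_R=(\Tree,\diamond_R)$ with $\mathcal T_R\models E^*$ and a valuation $\nu$ with $\varphi_\nu(u^*)\ne\varphi_\nu(v^*)$. The transport defines $a\diamond' b=b\diamond_R a$, which is exactly $\mathcal T_R^{\mathrm{op}}$ on the same carrier $\Tree$. The duality equivalence of Section~\ref{sec:problem}, $\mathcal M\models E\iff\mathcal M^{\mathrm{op}}\models E^*$, used together with $E^{**}=E$, gives $\mathcal T_R^{\mathrm{op}}\models E$. For the target we prove the homomorphism-level fact $\varphi^{\mathrm{op}}_\nu(w)=\varphi_\nu(w^*)$ by induction on the word $w$. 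This shows that the same valuation $\nu$ separates $u$ and $v$, so Lemma~\ref{lem:countervaluation} applies. Since the carrier is unchanged, its cardinality is preserved.

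\emph{Generalization.} Here the core certifies a law $E'$ obtained from $E$ by replacing some occurrences of a subterm (a repeated variable or a compound subterm $s$) with a fresh variable $y$. There is a substitution $\sigma$ with $\sigma(y)=s$ and $\sigma(E')=E$ syntactically. Every valuation $f$ of $E$ induces the valuation $f'=f$ extended by $f'(y)=\varphi_f(s)$, and then $\varphi_{f'}(w')=\varphi_f(\sigma w')$ for every word $w'$. Validity of $E'$ under $f'$ therefore yields validity of $E$ under $f$. This is exactly the ``specialize back'' step the generated proof performs. The target $F$ is untouched, so the same countervaluation works, and the carrier is again $\Tree$.

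Finally, the postprocessing syntax and byte-size checks only reject certificates, so they cannot introduce unsound results. The step I expect to need the most care is the generalization case. I would have to state precisely that replacing \emph{some} occurrences of $s$, rather than all of them, still gives $\sigma(E')=E$. I would handle this by defining $\sigma$ to send only $y$ to $s$ and fixing all other variables; since $y$ is fresh, the substitution lemma for free-magma homomorphisms does the rest. The target relation need not be preserved by generalization, but it does not have to be, because the target is fixed.
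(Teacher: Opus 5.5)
Your three cases are correct and match the corresponding steps of the paper's proof. Identity needs nothing. The opposite job applies the duality equivalence to source and target simultaneously; your induction $\varphi^{\mathrm{op}}_\nu(w)=\varphi_\nu(w^*)$ just makes the target half explicit. The generalized job uses closure of universal satisfaction under the substitution $\sigma$ with $\sigma(E')=E$, and fixing all variables except the fresh one settles the partial-replacement worry.

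The gap is that your list of transports is not the one the schedule uses, and the lemma quantifies over every $\tau_J$.

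\begin{enumerate}
  \item You omit the side swap. The core accepts only a source normalized as $\mathrm{x}\simeq t$, so a law presented with the variable on the other side must have its sides exchanged, and the certificate must be transported back. The paper lists this as one of its four operations and discharges it by symmetry of equality.
  \item You treat each transport in isolation and never argue closure under composition. The schedule runs the generalized sources ``in both orientations,'' so a single $\tau_J$ can be a specialization followed by the opposite-magma transport, possibly together with a swap.
\end{enumerate}

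Both repairs are short. Add the swap case; it trivially preserves models, countervaluations, and the carrier. Then note that each elementary transport turns ``$\mathcal A\models E_1$ and $\mathcal A\not\models F_1$'' into ``$\mathcal A'\models E_2$ and $\mathcal A'\not\models F_2$'' with the same underlying set $\Tree$, so every composite does as well. This is the paper's closing step, ``closure under composition proves the claim.''
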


\begin{proof}
The schedule composes only four operations.  Swapping the two sides of a law
uses symmetry of equality.  A generalized source $E_J$ is accompanied by an
explicit term substitution $\sigma$ with $\sigma(E_J)=E$; satisfaction of a
universal identity is closed under term substitution, so a model of $E_J$ is
a model of $E$.  Identity jobs require no transport.  Finally, an opposite
job applies the equivalence of Section~\ref{sec:problem} simultaneously to
the source and target: a model of $E^*$ refuting $F^*$ yields its opposite
magma, which models $E$ and refutes $F$.  None of these operations changes
the underlying carrier.  Closure under composition proves the claim.
\end{proof}

\begin{theorem}[Soundness of a kernel-accepted trace-search output]
\label{thm:candidate}
Suppose the procedure in Fig.~\ref{fig:trace-algorithm} returns a Lean
declaration for source $\mathrm{x}\simeq t$ and target $F$, and the fixed
official Lean environment accepts that declaration under the stated proof
policy.  Then its
presentation defines a countably infinite magma satisfying
$\mathrm{x}\simeq t$ and refuting $F$.
\end{theorem}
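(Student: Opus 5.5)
The proof chains the lemmas of this section, using kernel acceptance only to extract the three checked theorems that the returned declaration must contain. By construction of $\mathsf{CompileLean}$ and the \texttt{Goal} interface, an accepted declaration contains three things. First, a functionality theorem for the inductively defined $\Code$. Second, a universal source theorem built from the $\NoExc_R$ proofs for every $n\in I_M$. Third, a target refutation obtained from an explicit valuation $\nu$. The job may have run on a transformed source, namely the opposite orientation or a generalization $E_J$. I will therefore first prove the claim for the core output $L$ of $\mathsf{TraceCore}(E_J,F_J,\dots)$ and only at the end push it through $\tau_J$.

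For the core output, I first invoke Proposition~\ref{prop:least-semantics}. It identifies the Lean mutual inductive relations with the least fixed point $(\Code_R,\Step_R)$ of Definition~\ref{def:relational-semantics}, so theorems checked about the Lean objects are theorems about the paper's semantics. The functionality theorem is a realization of a closed projection--rank certificate with $d=1$: each recursive call carries a \texttt{termination\_by} size measure, and the kernel checks the strict decrease. Lemma~\ref{lem:code-functional} then gives $\mathsf{Fun}(\Code_R)$, so $R$ is well formed. By Proposition~\ref{prop:well-defined}, $\evalop_R$ is a total operation on $\Tree$, and Lemma~\ref{lem:eval-laws} supplies its evaluation equations. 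The accepted source theorem provides exactly the hypotheses of Lemma~\ref{lem:tableau}, since its case splits cover both the \texttt{raw} and \texttt{hit} constructors of every premise. Hence $\mathcal T_R\models E_J$. The target theorem gives symbolic evaluations of both sides of $F_J$ under $\nu$ ending in distinct constructor normal forms, and Lemma~\ref{lem:countervaluation} yields $\mathcal T_R\not\models F_J$. Lemma~\ref{lem:carrier-infinite} makes the carrier countably infinite.

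Finally, $\mathsf{TraceSynth}$ applies $\tau_J$ to $L$. Lemma~\ref{lem:transport} turns the certified nonimplication $E_J\not\Rightarrow F_J$ on the carrier $\Tree$ into the nonimplication from $\mathrm{x}\simeq t$ to $F$, on the same carrier. For an opposite job, the resulting magma is $\mathcal T_R^{\mathrm{op}}$. For a generalized job, it is $\mathcal T_R$ itself, using the substitution $\sigma$. The postprocessing checks only reject outputs and so cannot add unsoundness.

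The main obstacle is the bridge between kernel acceptance and the mathematical hypotheses of the lemmas: the Lean objects must mean what the paper's definitions say. I would handle this by fixing the official environment and proof policy as the trusted base. The required statement shape, with the carrier, the operation defined via $\Code$ and classical choice, and the \texttt{Goal} type asserting source validity and target failure, is then fixed by the interface rather than by the untrusted generator. Under this reading, acceptance is a proof of the conclusion for the specified $\evalop_R$, and the lemmas above describe its content. Nothing produced by the search, including masks, depth bounds and finite tests, needs to be trusted. The delicate point is confirming that the compiled datatype is exactly $\Tree$ with $\mathsf e$, $\mathsf k$ and $\pair{-}{-}$, so that Lemma~\ref{lem:carrier-infinite} applies. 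I would check this against the fixed emitted preamble, since the interface has no separate infinitude field.
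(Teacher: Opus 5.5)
Your proposal is correct and follows the same chain as the paper's proof. Kernel-checked functionality with Lemma~\ref{lem:code-functional} and Proposition~\ref{prop:well-defined} gives the total operation; Lemmas~\ref{lem:carrier-infinite}, \ref{lem:tableau}, and~\ref{lem:countervaluation} give infinitude, source validity, and target failure for the core instance; and Lemma~\ref{lem:transport} lifts these to the outer source and target. Your use of Proposition~\ref{prop:least-semantics} to identify the Lean inductive relations with the least semantics, and your remark that infinitude must come from the emitted carrier rather than from the \texttt{Goal} interface (cf.\ Corollary~\ref{cor:kernel-result}), only make explicit what the paper's short proof leaves implicit.
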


\begin{proof}
Kernel acceptance checks the generated functionality, source, and target
proofs rather than trusting the Python emitter.  The checked functionality
proof and Lemma~\ref{lem:code-functional} make $R$ well formed;
Proposition~\ref{prop:well-defined} supplies the total operation.
Lemmas~\ref{lem:carrier-infinite}, \ref{lem:tableau}, and
\ref{lem:countervaluation} establish infinitude, source validity, and target
failure for the core search instance, respectively.  Lemma~\ref{lem:transport}
transfers these conclusions to the source and target supplied to the outer
procedure.
\end{proof}

\begin{corollary}[Kernel-accepted certificate]
\label{cor:kernel-result}
If the fixed official Lean environment accepts a generated declaration of
type \texttt{Goal}, then the submitted carrier and operation satisfy the
encoded source theorem and refute the encoded target.  Combining that checked
declaration with Lemma~\ref{lem:carrier-infinite} for full-tree carriers, or
Proposition~\ref{prop:nf-soundness} for normal-form carriers, yields the
claimed infinite countermodel.
\end{corollary}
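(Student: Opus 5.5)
The plan is to read off the corollary from the semantics of the official \texttt{Goal} type, then add the cardinality statement from the carrier-specific lemma. First, I will unfold \texttt{Goal}. It is a dependent package containing a carrier type $A$, a binary operation on $A$, a universally quantified proof of the encoded source identity, and an explicit valuation together with a proof that the two encoded target sides differ. Kernel acceptance under the fixed environment and proof policy means that a closed term of this type exists. Reading its components as ordinary mathematics in Lean's foundational logic, extended only by the admitted classical axioms, gives a magma $(A,\diamond)$ with $(A,\diamond)\models E$ and a valuation $\nu$ with $\varphi_\nu(u)\neq\varphi_\nu(v)$. This step involves no appeal to the Python search, since the emitter is outside the trust boundary. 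It does, however, require that the encoding of the source and target laws in \texttt{Goal} be faithful to the ETP words, meaning that each side is parsed into the same tree of $\diamond$ applications. I will state this faithfulness as the one assumption about the fixed official environment.

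Second, I will obtain the refutation. Applying Lemma~\ref{lem:countervaluation} to the extracted valuation shows that $(A,\diamond)\not\models F$. For $F=\Law{2}$, the valuation simply exhibits two distinct elements of $A$, so the magma is nontrivial.

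Third, \texttt{Goal} records no infinitude field, so cardinality must be supplied separately by carrier type. For full-tree trace or guarded presentations, $A=\Tree$, and Lemma~\ref{lem:carrier-infinite} gives an injection $\mathbb N\to A$ together with countability via $\gamma$; hence the model is countably infinite. For CNF or completion outputs, $A=\Tree_{\mathsf{NF}}(R)$, and I will invoke Proposition~\ref{prop:nf-soundness}. That result should give both closure of the normal-form carrier under the synthesized operation and infinitude, for instance because the tower $\mathsf k^n(\mathsf e)$ contains no pair-root redex. The pair-root restriction of Section~\ref{sec:cnf-search} is exactly what guarantees this. Combining the three steps yields an infinite model of $E$ refuting $F$. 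When $F=\Law{2}$ and independent finite-side results hold, this is precisely the model required by the definition of an Austin law.

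The main obstacle is the second, carrier-dependent half. Infinitude is not checked by the kernel, so the argument must confirm that the carrier declared in the accepted term really is the datatype $\Tree$ or $\Tree_{\mathsf{NF}}(R)$ to which the cited lemma applies, and not some other type. I would address this by requiring the carrier to be syntactically fixed by the compiler template and checking it against the declaration. The normal-form case then depends entirely on Proposition~\ref{prop:nf-soundness}, which I take as given.
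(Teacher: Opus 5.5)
Your proposal is correct and follows essentially the same route as the paper: unpack the kernel-checked \texttt{Goal} term into its carrier, operation, universal source theorem, and target countervaluation, then supply infinitude from the separate carrier result (Lemma~\ref{lem:carrier-infinite} or Proposition~\ref{prop:nf-soundness}), since \texttt{Goal} has no infinitude field. Your appeal to Lemma~\ref{lem:countervaluation} is redundant, because the checked \texttt{Goal} term already contains the refutation, but it does no harm.
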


\begin{proof}
The declaration packages the selected carrier and operation, the universal
source theorem, and a target countervaluation against the official problem
interface.  Lean elaboration and kernel checking reconstruct those claims.
The separate theorem about the chosen carrier supplies infinitude, which is not a
separate field of the official \texttt{Goal} interface.
\end{proof}

Each returned file contains a complete Lean instance of the relevant argument.
For a trace model, the file proves $\Code$ functional before defining
$\evalop$ by choice.
In particular, no certificate assumes that $\Step$ is globally functional:
its local inversion lemmas are used only under the constructor patterns and
premise context for which they were generated.
For the normal-form specialization of Sec.~\ref{sec:normal-form}, both the
completion and CNF compilers prove closure of the carrier under every possible
selected rule.  Their symbolic source proofs consider every root rule and the
default-pair case at each product node.  They use occurs-checked unification,
record negative constraints when a rule does not match, and reject a candidate
if any proof branch remains unresolved.
Source theorems in both model families quantify over all carrier variables,
and the final declaration inhabits the official
\texttt{Goal} type with the carrier, magma instance, source theorem, and target
refutation.

\begin{proposition}[Normal-form soundness]
\label{prop:nf-soundness}
Let $R$ be a finite set of root rules whose outputs are normal whenever their
inputs are normal.  Suppose the symbolic proof resolves every default-pair and
rule branch of the source, and an explicit normal valuation separates the two
target terms.  Then the choice operation induced by $R$ on
$\Tree_{\mathsf{NF}}(R)$ is an infinite model of the source and a countermodel
to the target.
\end{proposition}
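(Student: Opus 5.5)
Three facts have to be established: the induced operation on $\Tree_{\mathsf{NF}}(R)$ is well defined and closed, the source identity holds there, and the carrier is infinite. The target refutation then follows from the separating valuation by the same argument as Lemma~\ref{lem:countervaluation}.

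For the operation, on normal inputs $a,b$ set $a\diamond b$ to be the output of a selected matching root rule if one exists, fixing the selection by a choice function (for instance the first matching rule in a fixed enumeration of $R$), and $\pair{a}{b}$ otherwise. Closure is immediate in the first case, because outputs are normal on normal inputs. In the default case, $\pair{a}{b}$ contains no redex at a proper subtree, since $a$ and $b$ are normal. It also contains no redex at the root, because no rule matched. So $\pair{a}{b}\in\Tree_{\mathsf{NF}}(R)$, and the operation is total on the normal-form carrier. The pair-root admissibility condition is what makes the notions "root redex" and "a rule fires" coincide here.

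For source validity I would fix an arbitrary normal valuation $f$ and evaluate both source sides bottom-up, just as in the proof of Lemma~\ref{lem:tableau}. At each product node the actual computation either fires some matching rule or takes the default pair. The symbolic proof considered every such branch, using occurs-checked unification and negative constraints for non-matching patterns. So one branch of the proof tree describes the concrete evaluation, and it is found by following the choices the actual operation made. The main obstacle is justifying this correspondence. The symbolic proof must be sound for \emph{every} selection, including whichever one the choice operation actually makes, and its negative constraints must faithfully encode "no rule matches". I would handle this by induction on the evaluation DAG: the concrete instance satisfies the constraints accumulated along that branch, so the branch's closing argument (normal-form closure, or a clash refuting the branch) applies to it, and the two sides come out equal.

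Infinitude: the trees $\mathsf k^n(\mathsf e)$ contain no pair-rooted subtree. Since every admissible left-hand side has a pair at the root, none of them can contain a redex, so all lie in $\Tree_{\mathsf{NF}}(R)$. By Lemma~\ref{lem:carrier-infinite} they are pairwise distinct, which gives an infinite carrier. Countability is inherited from $\Tree$.
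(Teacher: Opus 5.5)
Your proposal is correct and takes the same route as the paper's proof. Both argue closure from output normality, which you usefully supplement by checking that a default pair of normal inputs with no matching rule is itself normal; source validity because the symbolic proof covers every rule-or-default choice the selection can make; infinitude from the normal, injective unary tower; and target failure from the recorded valuation. You simply spell out in more detail the correspondence between the concrete evaluation and a branch of the symbolic proof, which the paper states in a single line.
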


\begin{proof}
Closure makes the operation total on $\Tree_{\mathsf{NF}}(R)$.  The symbolic
proof covers every output that the choice operation can select at each source
node, so every evaluation satisfies the source.  The unary tower is normal and
injective, hence the carrier is infinite; the recorded valuation refutes the
target.  Both compilers check closure, the universal source theorem, and the
target refutation in Lean.  The CNF compiler also emits its tower-injectivity
lemma; for completion certificates the same carrier fact is supplied by the
uniform mathematical argument above.
\end{proof}

\subsection{Trust boundary}

\begin{figure}[!htbp]
\centering
\small
\setlength{\tabcolsep}{3pt}
\begin{tabular}{@{}c@{\(\longrightarrow\)}c@{\(\Longrightarrow\)}c@{}}
\fbox{\parbox{0.25\linewidth}{\centering
 candidate enumeration\\symbolic filters\\proof search\\\emph{untrusted}}}
&
\fbox{\parbox{0.25\linewidth}{\centering
 certificate compiler\\Lean elaboration\\proof-term construction\\\emph{untrusted}}}
&
\fbox{\parbox{0.29\linewidth}{\centering
 official \texttt{Goal} definition\\Lean kernel\\permitted axioms\\
 \emph{logical trusted base}}}
\\[4pt]
\multicolumn{2}{c}{failure can cause only \emph{unknown} or rejection}
&
acceptance yields Cor.~\ref{cor:kernel-result}
\end{tabular}
\caption{Data flow and trust boundary.  Search metadata and Python proof
objects are hints only; the counted result is the theorem reconstructed from
the submitted source and accepted by the official Lean environment.}
\label{fig:trust-boundary}
\end{figure}

The logical trusted base is the Lean kernel together with the official
\texttt{Goal} definition and the axioms permitted by the Judge policy.  The
elaborator translates source into proof terms but does not enlarge that base:
the kernel checks the resulting term and all imported Mathlib declarations.
Enumeration, heuristics, Python unification, and certificate text generation
are likewise untrusted.  They may miss a model or emit a rejected file, but,
relative to the official encoding and permitted axioms, cannot make a false
theorem pass kernel checking.  Figure~\ref{fig:trust-boundary} also separates
logical validation from experimental bookkeeping: a JSON status or a
generator's own \texttt{valid} flag is never counted without the matching
accepted Judge receipt.  A receipt records the pinned checking environment;
the theorem's logical authority comes from kernel checking, not from the JSON
record or its archive hash.

\section{Mathematical structure of the synthesized models}
\label{sec:structure}

The 28 new classifications are not 28 unrelated programs.  Modulo duality,
all fourteen use the full carrier $\Tree$, the same free-pair default, the same
constructor size, and exactly one constructor for $\Code$.  For a presentation
of the form Eq.~\eqref{eq:formal-clause}, call $m$ its \emph{trace arity} and
call the ordered dependency graph
$h_j\to h_i$ whenever $h_j$ occurs in the $i$th premise its \emph{trace
dependency graph}.  Dependency-respecting syntax makes this graph acyclic.
Eleven discovered classes have trace arity one; three have trace arity two.
Table~\ref{tab:trace-schemas} gives their complete mathematical presentations.

\input{generated/trace_schemas_table.tex}

This yields a compact structural classification.

\begin{proposition}[One-clause structure]
Every new Austin model reported here is a well-formed, finitely presented
one-clause extension of the free-pair magma, and its functionality is witnessed
by a projection--rank certificate.  Up to opposite-magma duality, its complete
law-specific data consist of one Horn clause with at most two admissible-step
premises.
\end{proposition}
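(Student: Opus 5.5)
The proposition is essentially a bookkeeping statement over the finite list in Table~\ref{tab:trace-schemas}, so I would prove it by inspecting each of the fourteen duality-class representatives and then transporting the conclusions to the other member of each class. First, for each representative I would read off the synthesized rule set $R$ from the accepted certificate. By construction in $\mathsf{TraceCore}$, a mask $M$ yields exactly one clause of the form Eq.~\eqref{eq:formal-clause} whose head is $\Code(\cdot,\cdot;x)$. It has one $\Step$ premise per active node, so $m=|M|$. Checking the table then gives $m=1$ for eleven classes and $m=2$ for three, hence $m\le 2$ throughout. All non-$\Code$ structure (carrier $\Tree$, free-pair default, size, the \texttt{raw}/\texttt{hit} rules) is uniform across laws, so the only law-specific data is this single Horn clause. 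That settles the ``complete data'' part of the statement.

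Second, I would establish well-formedness and the certificate claim. Each of the 28 declarations was kernel-accepted, so Theorem~\ref{thm:candidate} applies. Its proof passes through the checked functionality theorem, which the generator emits only as a closed projection--rank certificate in the scalar fragment $d=1$ (Sec.~\ref{sec:rank-certificates}). Lemma~\ref{lem:code-functional} then gives $\mathsf{Fun}(\Code_R)$, that is, semantic well-formedness in the sense of Definition~\ref{def:well-formed-presentation}. Since $R$ is finite (a single clause), the presentation is finitely presented. Because the only departures from $\pair{a}{b}$ occur on $\operatorname{Exc}(R)$, via Eq.~\eqref{eq:eval}, the magma is a one-clause extension of the free-pair magma.

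Third, I would handle the duals. For an identity whose certificate came from an opposite job, Lemma~\ref{lem:transport} shows that the model is the opposite magma of a one-clause trace-tree magma for $E^*$. So ``up to opposite-magma duality'' the same clause data describes it, and the functionality certificate is literally the one checked for the dual orientation. Where instead a separate certificate was checked directly for the identity, the first two steps apply to it verbatim. Nontriviality and infinitude are supplied by Lemma~\ref{lem:carrier-infinite}, with duality preserving the carrier.

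The main obstacle is not the logic but its anchoring. The statement quantifies over the specific reported models, so the claim ``exactly one $\Code$ constructor and at most two premises'' rests on the table being a faithful transcription of the accepted Lean files. I would make that explicit by deriving the table from the certificates, as the \texttt{generated/} path suggests. I would also check one subtle point: no reported model came from a generalized-source job that might carry an extra clause. Lemma~\ref{lem:transport} shows that generalization only substitutes into the source and never adds clauses, so this case is also covered.
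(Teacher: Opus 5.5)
Your proposal is correct and follows essentially the same route as the paper. Both go row by row through the fourteen archived certificates of Table~\ref{tab:trace-schemas}, confirming a sole $\Code$ constructor and a checked functionality theorem, count eleven arity-one and three arity-two clauses, and obtain each dual law from the separately checked opposite-magma certificate; the differences are cosmetic. You route well-formedness through Theorem~\ref{thm:candidate} and Lemma~\ref{lem:code-functional} rather than citing each certificate's \texttt{code\_unique} theorem directly, and your branch for a dual certified directly rather than through the opposite construction never arises, since the paper models every dual by the opposite magma.
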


\begin{proof}
For each of the fourteen rows of Table~\ref{tab:trace-schemas}, the archived
certificate defines precisely the displayed clause as the sole constructor of
$\Code$, proves the corresponding \texttt{code\_unique} theorem, and proves
the source theorem by complete default-pair/exceptional-output case analysis.
Counting the displayed premises
gives eleven rows of arity one and three rows of arity two.  The certificate for
the dual law transports the operation to the opposite magma and is checked
separately.
\end{proof}

\begin{theorem}[Twenty-eight new Austin classifications]
\label{thm:austin-classifications}
Each of the first 28 laws in Table~\ref{tab:equations} is an Austin law.  They
form fourteen classes under opposite-magma duality.
\end{theorem}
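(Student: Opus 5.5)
The plan is to combine the two halves of the definition of an Austin law, which come from different sources, and then read off the duality count. The finite half is imported rather than proved here. All 28 laws are among the 96 order-five candidates of the ETP study~\cite{ETPOrder5Austin}. For each of those candidates, the public snapshot already records that every finite magma satisfying it is trivial. So the first step is to check that each of the first 28 rows of Table~\ref{tab:equations} belongs to that list of 96, and to cite that finite-side classification for it.

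The infinite half comes from the certificates. For each of the 28 laws, the archived Lean declaration of type \texttt{Goal} for the nonimplication $\Law{n}\Rightarrow\Law{2}$ was accepted by the Judge. Corollary~\ref{cor:kernel-result} turns acceptance into a carrier and an operation satisfying $\Law{n}$ and refuting $\mathrm{x}\simeq\mathrm{y}$; for $\Law{2}$ the refutation is the explicit pair $\mathsf e\ne\mathsf k(\mathsf e)$. Every one of these certificates is a trace model on the full carrier $\Tree$, by the one-clause structure proposition. Theorem~\ref{thm:candidate} together with Lemma~\ref{lem:carrier-infinite} therefore makes each model countably infinite, and nontrivial. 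Where a certificate was produced through an opposite-orientation or generalization job, Lemma~\ref{lem:transport} moves the conclusion back to the stated law without changing the carrier.

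With a nontrivial infinite model and triviality of all finite models in hand, each of the 28 laws is an Austin law by the definition.

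For the duality statement, I will compute the dual $E^*$ of each law by reversing the children at every product node, and normalize it to its ETP name. I then check that the 28 laws split into 14 pairs $\{E,E^*\}$ with $E\ne E^*$, so that no law is self-dual. The equivalence $\mathcal M\models E\Leftrightarrow\mathcal M^{\mathrm{op}}\models E^*$ shows that the Austin property is constant on each class. This gives a consistency check, but each law still carries its own certificate, so the classification does not depend on it.

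The mathematical argument itself is short. I expect the main obstacle to be the bookkeeping that links the two halves: confirming that the finite-side triviality results really cover exactly these 28 laws, stated in the same ETP naming and orientation as the certificates. If that correspondence fails for any law, only its infinite model is established, not its Austin status. I would handle this by tabulating, for each law, its equation number, its dual's number, the finite-side source, and the receipt of its accepted certificate, and then checking the table entry by entry.
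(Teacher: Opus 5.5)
Your proposal is correct and follows essentially the paper's route: the finite half is imported from the ETP order-five classification, and the infinite half comes from the kernel-checked full-tree certificates together with Lemma~\ref{lem:carrier-infinite}. The paper certifies one representative per row of Table~\ref{tab:trace-schemas} and gets each dual from the separately checked opposite construction plus the equivalence of Section~\ref{sec:problem}, whereas you use each law's own certificate (with Lemma~\ref{lem:transport} for opposite-orientation jobs) and explicitly verify the fourteen dual pairs, which the paper leaves implicit in its tables.
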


\begin{proof}
The finite-model half is the published ETP classification: every finite magma
satisfying any of these 28 laws is trivial~\cite{ETPOrder5Austin}.  For one
representative of each row of Table~\ref{tab:trace-schemas}, the archived Lean
certificate checks the universal source theorem and an explicit nontrivial
valuation in the full-tree carrier.  Lemma~\ref{lem:carrier-infinite} proves
that this carrier is infinite.  The separately checked opposite construction
and the equivalence in Section~\ref{sec:problem} give the dual representative.
Thus both clauses of the Austin-law definition hold for all 28 laws.
\end{proof}

\subsection{Worked example: a one-premise trace model}

Consider \Law{7755}:
\[
 x\simeq y\op\bigl(y\op((z\op(x\op x))\op y)\bigr).
\]
Its only exceptional clause is
\begin{equation}
 \Step(\pair{z}{\pair{x}{x}},y;h)
 \Longrightarrow \Code(y,\pair{y}{h};x).
 \label{eq:7755-clause}
\end{equation}
To match Lemma~\ref{lem:tableau} explicitly, name the five product nodes from
the leaves upward:
\[
 n_0=x\op x,\quad n_1=z\op n_0,\quad n_2=n_1\op y,
 \quad n_3=y\op n_2,\quad n_4=y\op n_3.
\]
The mask makes $n_2$ active and uses the root $n_4$ as the exceptional-clause
head.  The nodes $n_0,n_1,n_3$ are inactive.  The generated proofs of
$\NoExc_R$ for $n_0$ and $n_1$ show that their actual values are
$\pair{x}{x}$ and $\pair{z}{\pair{x}{x}}$; these equalities identify the first
input of the active-node premise.  Proposition~\ref{prop:well-defined} then
supplies
\[
 \Step(\pair{z}{\pair{x}{x}},y;h)
\]
for the actual value $h$ of $n_2$.  The $\NoExc_R$ proof for $n_3$, which is
checked under both possible constructors of that $\Step$ premise, yields
$n_3=\pair{y}{h}$.  Equation~\eqref{eq:7755-clause} therefore makes the root
$n_4$ equal to $x$.  This is precisely the two-part use of inactive nodes in
Lemma~\ref{lem:tableau}: $n_0,n_1$ justify the symbolic inputs of an active
node, while the outer-visible node $n_3$ rewrites the expression above that
active node.

Functionality is equally structural.  The left projection of the exceptional
clause's second input recovers its first input $y$.  If two instances of
Eq.~\eqref{eq:7755-clause} share both inputs, constructor injectivity equates
their trace outputs.  The generated pattern-specific $\Step$-inversion lemma
then recovers the embedded $x$ values.  Any remaining cyclic case would force
a finite constructor tree to contain a strictly larger copy of itself, and the
size inequalities rule it out.
The opposite operation gives the dual model for \Law{38249}.

\subsection{Two-premise models}

Three exceptional classes remember two earlier evaluations rather than one.
For \Law{38303}, the clause has the sequential trace
\[
 \Step(x,z;h_0)\wedge
 \Step(y,\pair{h_0}{x};h_1)
 \Longrightarrow \Code(\pair{h_1}{y},y;x).
\]
The second premise consumes the first trace output, and the root pattern
recovers the second.  Projection and size reasoning then propagates uniqueness
back through both premises.  No new carrier or semantic principle is needed.

The newly found class \Law{9680}/\Law{36524} instead records two independent
steps:
\[
 \Step(z,y;h_0)\wedge \Step(x,y;h_1)
 \Longrightarrow \Code(y,\pair{h_0}{\pair{y}{h_1}};x).
\]
Thus trace arity two does not by itself imply a dependency edge: the two
premises can be evaluated independently and only meet in the exceptional root
pattern.  The third arity-two class, \Law{35036}/\Law{11081}, has the same
empty dependency graph with a different root projection pattern (Table
~\ref{tab:trace-schemas}).

\subsection{The previously classified guarded pair}
\label{sec:guarded-model}

The two guarded-decoding certificates, \Law{22619}/\Law{22634}, independently
formalize a pair already classified by ALPS.  They use the same tree carrier
but a guarded decoder $\delta$:
\[
\begin{aligned}
 \delta(\mathsf e)&=\mathsf e,&
 \delta(\mathsf k(a))&=a,\\
 \delta(\pair{a}{a})&=\mathsf k(a),&
 \delta(\pair{a}{\pair{a}{x}})&=x,
\end{aligned}
\]
with all other pair shapes decoded to $\mathsf e$.  For \Law{22619},
multiplication returns $\delta(a)$ when the right input is $\mathsf e$.  When
$b\ne\mathsf e$ and
$a=\pair{b}{b}$, it returns $\mathsf e$; in every other case it returns the
free pair $\pair{a}{b}$.  The key identity
$\delta(y\op(y\op x))=x$ supplies the required cancellation.  The model for
\Law{22634} uses the opposite operation.  These two certificates strengthen
the formal evidence but are excluded from the novelty count.

\subsection{Normal-form models from completion and CNF search}
\label{sec:normal-form}

The completion procedure and the independently extracted CNF procedure
use different candidate enumerators and Lean compilers, but their accepted
models share the following mathematical form.  They synthesize finitely many
root rules
\[
  \Code(\ell_i,r_i;o_i),\qquad i=1,\ldots,k,
\]
over constructor patterns.  Let $\mathsf{NF}_R(t)$ mean that every subtree is
normal and no pair root matches a rule input.  The carrier becomes
\[
  \Tree_{\mathsf{NF}}(R)=\{t\in\Tree\mid\mathsf{NF}_R(t)\}.
\]
On normal inputs, multiplication chooses a matching root-rule output when one
exists and otherwise forms a free pair.  Search proves that every possible
rule output is normal and symbolically checks the source under every matching
rule and default-pair case.  Overlapping rules need not be confluent: classical choice selects
one output, while the proof covers every selectable constructor.  The unary
tower remains normal and therefore proves infinitude.

For \Law{20034},
\[
 x\simeq(y\op y)\op((z\op(x\op x))\op z),
\]
one root rule suffices:
\[
 \pair{y}{y}\op\pair{\pair{z}{\pair{x}{x}}}{z}\longmapsto x.
\]
Normality forces the displayed inner products to be free pairs and the source
contracts at the root.  In the clean experiment, completion and CNF each
independently generated and certified this same one-rule presentation.  Across
the six further duality classes in Table~\ref{tab:additional-models}, accepted
CNF presentations use one to thirteen root rules; completion covers five of
the six classes.  CNF also reconstructs the previously published ALPS pair
\Law{19966}/\Law{26105}.  These results illustrate a complementary model
language, but neither they nor the guarded pair enter the new-classification
count.  Agreement on a model does not identify the two enumeration algorithms.

\section{Evaluation}
\label{sec:evaluation}

The evaluation addresses four questions.  First, does each generated object
survive independent Lean checking?  Second, what does each search contribute
under its stated evaluation scope?  Third, does trace synthesis transfer beyond
the Austin identities on which it was developed?  Fourth, what capability
remains uncovered by established equational ATPs under the same resource tier?
Trace is the only procedure evaluated on the full Canonical-4187 union and is the only
one compared with the ATPs.  Guarded decoding, completion, and CNF search are
reported as targeted regressions of complementary model languages.

\subsection{Two source groups and two deduplication levels}

We froze two evaluation groups and their SHA-256 hashes before running the
procedures; neither contents nor hashes were changed in response to outcomes.
The first group, Order5-130, is a disjoint union of three order-five
subsets: KnownAustin-10, ten previously known Austin laws;
Candidate-96, the 96 identities listed by ETP as having only trivial
finite models and, in the source snapshot, no known infinite model; and
Open-24, 24 identities for which both finite- and infinite-model statuses
were unresolved in the originating study.  The second group,
ALPS-4141, is the public 4,141-row ALPS evaluation pool, spanning orders
five through eight~\cite{ALPSData2026,XieEtAl2026ALPS}.
For backward compatibility, machine-readable provenance retains the legacy
tags \texttt{austin96} for Candidate-96 and \texttt{generality34} for
KnownAustin-10 together with Open-24.  These are source tags, not additional
evaluation groups.

Together the two groups contain 4,271 source rows: 130 in Order5-130 and 4,141
in ALPS-4141.  Of the Order5 identities, 83 also occur as ALPS rows.  Retaining one record
for each cross-group overlap gives Merged-4188: all 130 Order5 records
and 4,058 ALPS-only records.  At a second, explicitly different level, we
quotient Merged-4188 by variable renaming and exchange of the two sides of an
equality.  Exactly one additional pair, internal to ALPS-4141, collapses under
this equivalence, giving Canonical-4187: 130 Order5 classes and 4,057
ALPS-only classes.  Magma duality is \emph{not} part of this canonicalization;
we use it only to organize mathematical discoveries.

Unless a sentence explicitly says ``source row'' or ``original ALPS row,'' all
full-search and ATP counts below refer to Canonical-4187.  The internal ALPS
duplicate belongs to a class solved by trace.  Consequently, 624 accepted
canonical classes have an ALPS representative, but they project to 625
accepted rows in the original 4,141-row file.  This convention prevents counts
from the two overlapping source groups from being added as though they were
disjoint.

The tuned trace executable was run afresh on all 4,187 classes in
Canonical-4187.
The other executables were rerun only on frozen historical-hit sets: two
problems for guarded decoding, 61 for completion, and 17 for sound CNF search.
These regressions test preservation of known capability and certificate
validity; they are not estimates of coverage on Canonical-4187.  Every profile has
a 120-second internal search budget, a 125-second outer allowance, a
2,000-MiB process-tree ceiling, and a 1,024,000-byte certificate ceiling in a
two-vCPU, 2,048-MiB sandbox.  Search output is counted only when the exact Lean
source emitted for that run has an \texttt{accepted} Judge v3 receipt.

\subsection{Independent coverage and certificate validity}

\paragraph{Primary trace result.}
\input{generated/trace_full_results_table.tex}

The tuned depth sweep generates 636 certificates, and Judge v3 accepts all
636.  Among the 3,551 unsolved problems, 3,295 terminate without a candidate
and 256 reach the process-tree memory ceiling.  These are outcomes of a bounded
search, not counterevidence to the existence of a model.  The largest emitted
certificate is 521,804 bytes, below the configured ceiling.

On Candidate-96, trace generates and certifies 28 new infinite countermodels, or
fourteen classes under duality.  The other 68 problems are unresolved by this
bounded trace run: 65 terminate without a candidate and three reach the memory
ceiling.  All 28 identities occur in the published 96-candidate campaign and
were left unresolved by its Vampire procedure~\cite{ETPOrder5Austin}.  In the
released ALPS data, 24 have null baseline resolution; \Law{4957}, \Law{5012},
\Law{40917}, and \Law{41252} are absent from the screened pool.  The exact-law
literature and artifact audit found no earlier classification for any of the
28.

Across Order5-130, trace accepts 36 identities.  Its three constituent subsets
remain separate in Table~\ref{tab:trace-full-results}: Candidate-96 contributes
28; KnownAustin-10 contributes six, namely three known Austin duality classes;
and Open-24 contributes the pair \Law{17260}/\Law{28740}.  The last pair now
has certified infinite models but is not called Austin until its finite side is
settled.  On the 4,057 ALPS-only classes, trace accepts 600.  A further 24
accepted Candidate-96 identities also occur in ALPS-4141, so 624 accepted
canonical classes have an ALPS representative.  The accepted internal ALPS
duplicate makes these 624 classes correspond to 625 rows of the original pool.
This transfer result is evidence that the trace language is not an encoding of
the fourteen new classes by equation identifier.

\paragraph{Supplementary searches.}
\input{generated/isolated_results_table.tex}

Table~\ref{tab:isolated-results} deliberately reports the three other search
families on their historical-hit regressions.  Guarded decoding regenerates
and certifies \Law{22619}/\Law{22634}.  The main completion profile generates
60 of its 61 targets: Judge accepts 57, while three receive no terminal response
before the experiment's 60-minute client cutoff.  It does not generate
\Law{9001325}.  A generalized schedule accepts that last problem, and an
indexed certificate profile accepts the three client-cutoff cases.  These
alternatives cover the same completed-rule model language; they differ in
candidate schedule and Lean proof representation.  We therefore count 61
completion problems, not 65 certificates.  Sound CNF search regenerates and
certifies all 17 of its targets.

Across the four selected accepted sets, trace, guarded decoding, completion,
and CNF cover 668 canonical classes in union.  Viewed through the two source
groups, this union contains 44 Order5-130 classes and 652 classes having an
ALPS-4141 representative (653 original ALPS rows); these views overlap and
must not be added.  The individual selected sets have sizes 636, 2, 61, and 17.
Trace contributes 598 classes not covered by a selected supplementary set;
guarded contributes two, completion 21, and CNF seven.  The intersections are
38 for trace--completion, eight for trace--CNF, and ten for completion--CNF;
eight problems lie in all three of those sets, and guarded is disjoint from
them.  Because only trace was run on all of Canonical-4187, this 668-class
union is coverage accounting for the accepted artifacts, not a full-benchmark
accuracy estimate for the supplementary searches.

The order-five supplemental results clarify what the other model languages
add.  Guarded decoding reconstructs the ALPS pair
\Law{22619}/\Law{22634}; CNF reconstructs the ALPS pair
\Law{19966}/\Law{26105}.  These four identities were already classified and
are excluded from the 28-new-identity count.  Within the KnownAustin-10 and
Open-24 subsets of Order5-130, completion covers five of the six classes in
Table~\ref{tab:additional-models}, while CNF
covers all six.  Trace covers four.  The Open-24 pair
\Law{17260}/\Law{28740} remains described only as an infinite countermodel.

\input{generated/additional_models_table.tex}

Across the public result table there are 731 certificate--profile occurrences:
728 accepted and the three completion client-cutoff records above.  This total
includes alternate certificates, cross-family overlap, and twelve archived
trace certificates from an earlier preflight; it is not a solved-problem
count.  There are no incorrect, malformed, or other non-timeout Lean
rejections.  An earlier trace preflight solved 643 problems, but its later
repeat missed twelve of them; their accepted certificates are retained as a
separately labelled fluctuation archive and are excluded from the fresh count,
the primary conclusions, and Table~\ref{tab:trace-full-results}.

\subsection{Construction-aware ATP comparison}

\paragraph{Our same-resource experiment.}
We ran Vampire~5.0.1~\cite{KovacsVoronkov2013}, E~3.5.1
~\cite{Schulz2002}, and Twee~2.6.1~\cite{Smallbone2021} independently on both
frozen source groups.  Every tool received one 120-second allowance per source
row in a two-vCPU, 2,048-MiB sandbox, with a sampled 2,000-MiB aggregate
process-group RSS limit.  Vampire used its CASC portfolio and E used
\texttt{--auto}, both in the conjecture-proving direction.  Twee used one
parameter-free complete run, \texttt{twee --tstp --formal-proof problem.p}; the
last option retains proof output but does not change the search.  This is one
ALPS-aligned completion configuration, not the published eight-configuration
ALPS sweep.  Table~\ref{tab:same-resource-atp} keeps Order5-130 and ALPS-4141
separate because 83 Order5 rows recur in the ALPS pool.  We compare the ATPs
with trace alone, the only one of our searches run on the full canonical union.

\input{generated/same_resource_atp_table.tex}

The result types answer different questions.  An implication proof establishes
that the source law forces \Law{2}; it is a proof-side result, not a model.
Raw \texttt{CounterSatisfiable} messages from the incomplete Vampire and E
proof-mode runs are retained in the artifact as diagnostics but are not treated
as scientific verdicts.  Only a clean terminating saturation from the
parameter-free complete Twee run is counted as trusted counter-satisfiable.
Such a result establishes a nontrivial model of the source law, but does not by
itself identify its carrier, prove it infinite, or provide a Lean certificate.

After cross-source deduplication and quotienting by variable renaming and
equality symmetry, the three-tool union on Canonical-4187 contains 94
implication proofs and 18 trusted counter-satisfiable classes, with no conflict.
Independent finite-side certificates exclude nontrivial finite models for 16
of those 18 classes, so infinitude follows for them; the remaining pair,
\Law{17260}/\Law{28740}, belongs to Open-24 and yields only nontrivial-model
existence.  No ATP run emits an explicit carrier operation or a Lean-checkable
model certificate.  By contrast, trace constructs and Lean-certifies explicit
infinite magmas for 636 canonical classes.  Its accepted set is disjoint from
the 94-class proof union and overlaps the 18-class counter-satisfiable union on
eight classes.  None of the three ATP runs is decisive on any of the 28 newly
classified identities.  These are bounded observations, not impossibility
results; the construction comparison concerns the 16 externally justified
infinite existences, rather than the aggregate 112 proof- and model-side
statuses.

\paragraph{The published ALPS baseline.}
Xie et al.'s experiment uses a different protocol: an eight-configuration
portfolio and a staged per-configuration sweep up to 600 seconds
~\cite{XieEtAl2026ALPS,ALPSData2026}.  At the 30-second rung it reports 87 new
implication proofs and four construction-side models; the later rungs add 23
implication proofs and no models.  Thus its largest-budget total is 110
proof-side results and four model-side results, leaving 4,027 of 4,141 rows
unresolved.  All four models come from \texttt{twee/complete}; the Vampire
saturation configurations' resolutions are implication proofs.  ALPS certifies
the four model-side cases through finite presentations and complete saturation,
but does not emit an explicit carrier operation as a Lean theorem.  We report
these published results separately and do not pool them with our same-resource
measurements.  They nevertheless reinforce the same qualitative point: the
scarce outcome is a constructed model, not a proof that an implication holds.

\subsection{Feasibility, environment, and artifact}

Table~\ref{tab:feasibility} reports the algorithmic interval from parsed input
through candidate compilation for accepted records, separate from Judge time.

\input{generated/feasibility_table.tex}

The accepted trace models are usually found early: the mean search time is
3.909 seconds, the median is 0.766 seconds, and the maximum is 118.562 seconds.
In contrast, the median
over all 4,187 Canonical-4187 trace runs is 120.406 seconds because an unsuccessful bounded
run normally consumes its allocation.  Peak RSS is sampled rather than read
atomically at the enforcing boundary; a few samples can therefore slightly
exceed 2,000 MiB before termination.  Judge time and queue or transport time
are recorded separately from search time.

The experiment used at most 120 reusable sandboxes and at most 48 concurrent
Judge calls.  The official solver image uses Python~3.11.  Receipts identify
Judge v3 revision \texttt{380d1b2c}, Lean~4.33.1, and Mathlib revision
\texttt{0df444a3}.  The artifact records the full combined service name,
the complete Mathlib commit, and the proof policy.

The release records both frozen source groups (4,271 rows), the 83 cross-group
matches defining Merged-4188, and the additional within-ALPS equivalence
defining Canonical-4187.  It also contains exact standalone programs for all
selected profiles, per-run results,
731 receipt occurrences, 731 individual certificate files, and a SHA-256
manifest.  Its verifier recomputes the fresh trace count, supplemental selected
coverage, exact intersections, certificate--receipt pairing, source hashes,
Judge revision, and privacy allowlist.  A separate same-resource ATP artifact
contains the completed run records, decisive proof streams, the
Canonical-4187 view, tool commands and versions, and its own manifest; its verifier recomputes
the benchmark-separated results, the canonical 94/18 unions, and the absence
of contradictory statuses.  The public
artifacts intentionally exclude controller secrets, sandbox identifiers,
private endpoints, and the full multi-method production solver.  The
repository is available at
\url{https://github.com/YanbiaoLab/trace-tree-magmas}.
Each experiment README gives a one-command offline verification path; no
network access or re-execution of the search is required for the evidence
audit.  Re-running the searches and replaying Lean proofs are documented as
separate reproducibility levels rather than being conflated with hash
verification.

\section{Related work}

Austin and Kisielewicz established and analyzed identities with no nontrivial
finite models~\cite{Austin1965,Austin1966,Kisielewicz1990,Kisielewicz1997}.
The present work addresses the constructive infinite half of that
classification problem.

\paragraph{Automatic finite and infinite model building.}
Finite model finders such as Mace4 enumerate bounded domains and are powerful
for ordinary equational countermodels, but a finite table cannot witness the
infinite side of an Austin law~\cite{McCune2003}.  Nitpick integrates a
relational model finder into Isabelle, supports rich higher-order syntax, and
obtains countermodels through finite scopes~\cite{BlanchetteNipkow2010}.
Recent Lean-oriented counterexample generation similarly produces checked
witnesses to false statements, but does not synthesize an infinite algebra
satisfying a universal magma identity~\cite{LiEtAl2026}.

Automated infinite-model construction itself has a substantial history.
Peltier represented infinite first-order models by tree automata and regular
tree grammars, later constructing non-ambiguous formulae for equational clause
sets and connecting finite and infinite model builders
~\cite{Peltier1997,Peltier2003,Peltier2008}.  Model Evolution lifts the DPLL
view to first-order logic and represents candidate interpretations by evolving
contexts~\cite{BaumgartnerTinelli2008}.  AGES synthesizes logical models for
order-sorted first-order theories from algebraic interpretations and has both
a formal account and a dedicated implementation
~\cite{GutierrezLucas2018,GutierrezLucas2019}.  More recently, symbolic model
construction has been developed for saturated constrained Horn clauses and
for first-order literal models~\cite{BrombergerEtAl2023,BrombergerEtAl2024},
while RInGen searches quantified invariants over algebraic data types using
grammar-based symbolic representations~\cite{KostyukovEtAl2021}.  These lines
make clear that neither automated infinite-model construction nor finite
descriptions of infinite interpretations are first introduced here.

Elad, Padon, and Shoham provide another genuinely infinite representation:
symbolic structures for satisfiable quantified verification conditions,
together with a model-finding procedure and a decidable fragment for which the
representation is complete~\cite{EladPadonShoham2024}.  Trace-tree magmas have
a different mathematical interface.  They search for one total binary
operation on a free constructor algebra.  A source-law trace proposes finitely
many Horn rules for exceptional outputs; descent on constructor size proves
that the resulting relation is functional; and exhaustive case analysis over
default pairs and exceptional outputs proves the universal identity.  Our
search space and proof requirements are therefore specialized to equational
algebra rather than being an alternative notation for those general
frameworks.

\paragraph{ATP saturation and extracted models.}
General first-order ATPs such as Vampire establish many equational implications
by refutation~\cite{KovacsVoronkov2013,Janota2025}.  A saturated unit-equational
clause set can sometimes be read as a convergent rewrite presentation of an
explicit, possibly infinite model.  Janota, Rawson, and Schulz systematize that
construction for Vampire and E and propose certified termination and
confluence checks~\cite{JanotaRawsonSchulz2026}.  Their reported study starts
from successful saturations in the order-at-most-four ETP corpus; it does not
cover the order-five identities classified here.  Our procedure instead
searches the constructor-tree operation directly and emits its source proof
and separating valuation in Lean.

Against this background, our claim is specific.  We introduce
\emph{rank-decreasing sparse trace-tree magmas} as a model language for magma
identities, an automated procedure that searches this language directly from a
source identity, and a compiler whose output is a self-contained Lean~4 proof
of functionality, universal source validity, carrier infinitude, and target
failure.

Deductive program synthesis also couples search and proof.  Hozzov\'a et al.
extend superposition with answer literals to synthesize recursion-free
programs from proofs of functional specifications~\cite{HozzovaEtAl2023}.
Our target is instead a semantic structure---a total operation on a provably
infinite carrier---whose universal law and countervaluation are Lean-certified;
our grammar and obligations therefore concern model construction rather than
program extraction.

Knuth--Bendix completion and term-rewriting theory motivate free constructors,
critical matches, and normal forms~\cite{KnuthBendix1970,BaaderNipkow1998}.
Sternagel and Winkler's verified certifier checks ordered-completion runs and
equational satisfiability certificates in Isabelle/HOL
~\cite{SternagelWinkler2019}.  That work validates completion-based reasoning;
our main procedure instead synthesizes an exceptional operation from a source
evaluation trace, while the smaller completion variant emits a direct Lean
model rather than an ordered-completion run.
The trace family is not merely an orientation of the source.  Its mutually
inductive admissible-step relation permits earlier exceptional evaluations,
and the generator must prove the exceptional-output relation functional.

ETP combines large-scale automated reasoning with formal verification
~\cite{BolanEtAl2025}, while ALPS explores construction-oriented mathematical
reasoning~\cite{XieEtAl2026ALPS}.  To the best of our literature and
public-artifact audit, no prior work defines or searches the trace-tree model
family above, or compiles discovered members of that family into self-contained
Lean~4 model certificates.  Our searches certify 32 order-five Austin
candidates in total.  For 28 of them---fourteen classes modulo opposite-magma
duality---they provide the first public infinite countermodels and hence 28 new
Austin classifications.  The remaining four identities, forming two duality
classes previously reported by ALPS, are independent reconstructions and are
explicitly excluded from that novelty count.

\section{Limitations and future work}

The trace grammar and its bounded enumeration are intentionally incomplete.  It is strongest for identities
orientable as a variable equal to a term and whose evaluation closes through a
small number of exceptional traces.  Failure to find a model proves nothing.
In particular, this work does not classify inputs on which a procedure finishes
without a certificate or is stopped by a time or memory limit.  A generated
candidate is likewise not evidence until Lean accepts its certificate.  Lean
accepts every certificate in the fresh trace run and in the guarded and CNF
regressions.  In the main completion regression, three generated certificates
receive no terminal response before the 60-minute client cutoff; accepted
indexed alternatives for the same three problems are supplied, but the original
three files are not counted as accepted.  A further generalized schedule covers
the main profile's one no-candidate case.  We make no correctness claim for an
unaccepted file, and the artifact preserves every outcome.

The one-clause result suggests a mathematical research program.  A syntactic
criterion for a constructor projection that recovers the output parameter,
together with an acyclic size ranking, could
replace mask enumeration by a theorem for a class of identities.  Multiple
compatible Horn clauses may cover terms with several essential branches.
Richer constructors or lexicographic ranks may expose models that the current
tree size cannot certify.  Finally, the infinite models for
\Law{17260}/\Law{28740} motivate a finite-model investigation to determine
whether this pair is Austin.

At the implementation level, a future verified compiler could reduce the
trusted engineering gap between the Python candidate and its Lean rendering.
The present design instead follows a proof-producing discipline: every output
is independently reconstructed and checked, so compiler bugs cause rejection
rather than unsound acceptance.

\section{Conclusion}

Trace-tree magmas provide a finite search language for genuinely infinite
algebraic countermodels.  A least relational semantics defines their meaning
independently of the implementation.  For each reported model, a generated
Lean proof uses constructor projection and size descent to establish
functionality of the exceptional-output relation.  Complete case analysis over
default pairs and exceptional outputs proves the source identity for every
valuation, and a concrete countervaluation refutes the target.  This theorem
chain turns heuristic search into proof-producing model synthesis without
claiming that the search is complete.

The method establishes the first public infinite countermodels in our audit
for 28 previously unclassified order-five identities, or fourteen classes
modulo duality.  Each new class admits a single dependency-respecting Horn
clause with at most two admissible-step premises, revealing a common
mathematical structure rather than 28 unrelated witnesses.  Evaluation keeps
Order5-130 and ALPS-4141 distinct: their 4,188-record deduplicated union has
4,187 classes after one further within-ALPS pair is identified under variable
renaming and equality-side exchange.  In a fresh
run on those classes, trace produces 636 certificates---36 on Order5-130 and
600 on the ALPS-only remainder---all accepted by the official Judge.  Of these,
624 canonical classes have ALPS representatives and correspond to 625 rows of
the original 4,141-row ALPS pool.  Targeted regressions separately confirm accepted outputs of the guarded,
completion, and CNF model languages without folding their union into the ATP
comparison.  In the same-resource experiment, every one of the 28 new cases
remains without a decisive ATP answer.  The three ATPs prove implications for
94 canonical classes; only complete Twee supplies trusted counter-satisfiable
results, on 18 classes, and independent finite-side facts force infinitude for
16 of them.  No ATP run produces an explicit Lean-certified infinite model.
These results
locate the contribution at the intersection of automated deduction, explicit
model synthesis, and proof production, and motivate a broader theory of which
magma laws admit finite trace-tree presentations.

\paragraph{Author contributions.}
Jiaming Zhao generalized the central concept, extracted the mathematical
structure, designed and implemented the algorithms and Lean certificate
generator, ran the large-scale experiments, discovered the new Austin-law
classifications, and wrote the manuscript.  Bing Wu developed and maintained
all experimental infrastructure, including the database,
\texttt{judge-v3-repl} distributed Lean verification service, and Alibaba
Cloud Sandbox setup, and contributed to offline exploration.  Xu Miao is the
corresponding author.

\appendix

\section{The 32 certified Austin identities}
\label{app:identities}

The operation is written as $\ast$.  The first 28 rows are the new
classifications.  The final two dual pairs,
\Law{19966}/\Law{26105} and \Law{22619}/\Law{22634}, are independent formal
reconstructions of earlier ALPS classifications.

\input{generated/equations_table.tex}

\bibliographystyle{splncs04}
\bibliography{references}

\end{document}

%% file: generated/trace_schemas_table.tex
\begin{table}[!t]
\caption{The fourteen new duality classes as one-clause trace-tree models.
Here $\langle a,b\rangle$ is the free pair constructor, $\mathsf S$ is the
admissible-step relation, and $\mathsf C$ is the exceptional-output relation.  The
listed representative uses $\star=\mathsf{eval}$; the other law in the first
column is modeled by the opposite magma.}
\label{tab:trace-schemas}
\centering
\scriptsize
\setlength{\tabcolsep}{3pt}
\begin{tabular}{@{}>{\raggedright\arraybackslash}p{3.0cm}c>{\raggedright\arraybackslash}p{7.25cm}@{}}
\toprule
Duality class & Premises & Law-specific Horn clause \\
\midrule
\Law{4952} / \Law{41252}
  & 1 & $\mathsf S(z,y;h)\Rightarrow
          \mathsf C(y,\langle x,\langle y,\langle y,h\rangle\rangle\rangle;x)$ \\
\Law{40914} / \Law{4957}
  & 1 & $\mathsf S(y,x;h)\Rightarrow
          \mathsf C(\langle\langle\langle h,y\rangle,z\rangle,x\rangle,z;x)$ \\
\Law{5012} / \Law{41253}
  & 1 & $\mathsf S(y,z;h)\Rightarrow
          \mathsf C(y,\langle x,\langle z,\langle z,h\rangle\rangle\rangle;x)$ \\
\Law{5066} / \Law{41239}
  & 1 & $\mathsf S(z,y;h)\Rightarrow
          \mathsf C(y,\langle y,\langle x,\langle y,h\rangle\rangle\rangle;x)$ \\
\Law{5141} / \Law{40917}
  & 1 & $\mathsf S(x,y;h)\Rightarrow
          \mathsf C(y,\langle y,\langle z,\langle y,h\rangle\rangle\rangle;x)$ \\
\Law{5295} / \Law{40909}
  & 1 & $\mathsf S(x,y;h)\Rightarrow
          \mathsf C(y,\langle z,\langle y,\langle y,h\rangle\rangle\rangle;x)$ \\
\Law{38303} / \Law{7701}
  & 2 & $\mathsf S(x,z;h_0)\wedge
          \mathsf S(y,\langle h_0,x\rangle;h_1)\Rightarrow
          \mathsf C(\langle h_1,y\rangle,y;x)$ \\
\Law{7755} / \Law{38249}
  & 1 & $\mathsf S(\langle z,\langle x,x\rangle\rangle,y;h)\Rightarrow
          \mathsf C(y,\langle y,h\rangle;x)$ \\
\Law{9345} / \Law{36713}
  & 1 & $\mathsf S(x,y;h)\Rightarrow
          \mathsf C(y,\langle h,\langle z,\langle y,y\rangle\rangle\rangle;x)$ \\
\Law{9384} / \Law{36714}
  & 1 & $\mathsf S(x,z;h)\Rightarrow
          \mathsf C(y,\langle h,\langle y,\langle z,z\rangle\rangle\rangle;x)$ \\
\Law{9680} / \Law{36524}
  & 2 & $\mathsf S(z,y;h_0)\wedge\mathsf S(x,y;h_1)\Rightarrow
          \mathsf C(y,\langle h_0,\langle y,h_1\rangle\rangle;x)$ \\
\Law{9667} / \Law{36638}
  & 1 & $\mathsf S(z,y;h)\Rightarrow
          \mathsf C(y,\langle h,\langle x,\langle y,y\rangle\rangle\rangle;x)$ \\
\Law{35036} / \Law{11081}
  & 2 & $\mathsf S(y,z;h_0)\wedge\mathsf S(x,y;h_1)\Rightarrow
          \mathsf C(\langle h_0,\langle h_1,x\rangle\rangle,y;x)$ \\
\Law{11116} / \Law{34888}
  & 1 & $\mathsf S(z,x;h)\Rightarrow
          \mathsf C(y,\langle\langle x,h\rangle,\langle y,y\rangle\rangle;x)$ \\
\bottomrule
\end{tabular}
\end{table}

%% file: generated/trace_full_results_table.tex
\begin{table}[t]
\caption{Fresh trace result on Canonical-4187.  The first three rows are the
disjoint constituents of Order5-130; ``ALPS-only'' is the residual canonical
contribution of ALPS-4141 after cross-group deduplication.  The three terminal
generation states are exhaustive.  ``No candidate'' and ``memory limit'' make
no mathematical claim about the input.}
\label{tab:trace-full-results}
\centering
\scriptsize
\setlength{\tabcolsep}{3pt}
\begin{tabular}{@{}lrrrr@{}}
\toprule
Canonical partition & Inputs & Accepted & No candidate & Memory limit \\
\midrule
Candidate-96 & 96 & 28 & 65 & 3 \\
KnownAustin-10 & 10 & 6 & 4 & 0 \\
Open-24 & 24 & 2 & 22 & 0 \\
\cmidrule(lr){1-5}
Order5-130 subtotal & 130 & 36 & 91 & 3 \\
ALPS-only canonical & 4,057 & 600 & 3,204 & 253 \\
\midrule
Canonical-4187 & 4,187 & 636 & 3,295 & 256 \\
\bottomrule
\end{tabular}
\end{table}

%% file: generated/isolated_results_table.tex
\begin{table}[t]
\caption{Scope-correct accounting for the four tuned searches.  Trace was run
on Canonical-4187, the canonical union of Order5-130 and ALPS-4141.  The other rows are regressions on each
method's historical hit set and therefore do not measure accuracy over all of
Canonical-4187.  ``Selected coverage'' uses accepted alternate completion
certificates for the four cases not accepted by the main profile.}
\label{tab:isolated-results}
\centering
\footnotesize
\setlength{\tabcolsep}{4pt}
\renewcommand{\arraystretch}{1.08}
\begin{tabular}{@{}lrrrr@{}}
\toprule
Search and evaluation scope & Tested & \shortstack{Main\\accepted} &
\shortstack{Selected\\accepted} & \shortstack{Trace\\overlap} \\
\midrule
Trace depth sweep (Canonical-4187) & 4,187 & 636 & 636 & -- \\
Guarded decoder (historical hits) & 2 & 2 & 2 & 0 \\
Strict completion (historical hits) & 61 & 57 & 61 & 38 \\
CNF NF16 (sound historical hits) & 17 & 17 & 17 & 8 \\
\bottomrule
\end{tabular}
\end{table}

%% file: generated/additional_models_table.tex
\begin{table}
\caption{Supplementary Order5-130 results on six further duality classes.
The first five rows lie in KnownAustin-10; the last lies in Open-24.  Trace
accepts three known Austin classes and the class whose two statuses were open;
completion covers five classes and CNF covers all six in their targeted
regressions.  The last row now has a kernel-checked infinite model, but its
finite-model status remains open.}
\label{tab:additional-models}
\centering
\scriptsize
\setlength{\tabcolsep}{3pt}
\begin{tabular}{@{}llccc@{}}
\toprule
Duality class & Prior status & Trace & Completion & CNF rules \\
\midrule
\Law{4916} / \Law{41082} & known Austin & -- & accepted & 5 \\
\Law{15535} / \Law{30591} & known Austin & -- & -- & 13 \\
\Law{17522} / \Law{28770} & known Austin & accepted & accepted & 2 \\
\Law{20034} / \Law{25964} & known Austin & accepted & accepted & 1 \\
\Law{22455} / \Law{22818} & known Austin & accepted & accepted & 2 \\
\Law{17260} / \Law{28740} & both statuses open & accepted & accepted & 2 \\
\bottomrule
\end{tabular}
\end{table}

%% file: generated/same_resource_atp_table.tex
\begin{table}[t]
\caption{Same-resource ATP outcomes, with the overlapping source groups kept
separate.  ``Proofs'' are implications to \Law{2}, not models.  A trusted
counter-satisfiable result is available only from complete Twee; ``infinite''
is then inferred from an independent finite-side certificate, not emitted by
the ATP.  The four ALPS counter rows duplicate Order5-130 rows.  No ATP emitted
an explicit model or a Lean-checkable certificate.}
\label{tab:same-resource-atp}
\centering
\scriptsize
\setlength{\tabcolsep}{2.5pt}
\begin{tabular}{@{}llrrrr@{}}
\toprule
Benchmark & Tool & \shortstack{Implication\\proofs} &
\shortstack{Trusted\\counter-sat.} &
\shortstack{Infinite by\\admissibility} &
\shortstack{No\\decisive result} \\
\midrule
Order5-130 & Vampire 5.0.1 & 0 & 0 & 0 & 130 \\
Order5-130 & E 3.5.1 & 0 & 0 & 0 & 130 \\
Order5-130 & Twee complete & 2 & 18 & 16 & 110 \\
\midrule
ALPS-4141 & Vampire 5.0.1 & 28 & 0 & 0 & 4,113 \\
ALPS-4141 & E 3.5.1 & 54 & 0 & 0 & 4,087 \\
ALPS-4141 & Twee complete & 79 & 4 & 4 & 4,058 \\
\bottomrule
\end{tabular}
\end{table}

%% file: generated/feasibility_table.tex
\begin{table}[t]
\caption{Search time and sampled peak process-tree RSS.  Entries are
mean / median / maximum; Judge time is excluded.  Trace rows come from the
fresh full run.  Auxiliary rows are measured only on the targeted regressions
of Table~\ref{tab:isolated-results}.}
\label{tab:feasibility}
\centering
\scriptsize
\setlength{\tabcolsep}{3pt}
\renewcommand{\arraystretch}{1.08}
\begin{tabular}{@{}lrrrrrrr@{}}
\toprule
Profile / scope & Count & \multicolumn{3}{c}{Search time (s)} &
\multicolumn{3}{c}{Peak RSS (MiB)} \\
\cmidrule(lr){3-5}\cmidrule(l){6-8}
& & Mean & Median & Max. & Mean & Median & Max. \\
\midrule
Trace / Candidate-96 accepted & 28 & 1.29 & 0.58 & 10.77 & 112.8 & 66.1 & 823.3 \\
Trace / KnownAustin-10 accepted & 6 & 0.54 & 0.57 & 0.64 & 64.9 & 66.0 & 68.9 \\
Trace / Open-24 accepted & 2 & 0.55 & 0.55 & 0.56 & 64.2 & 64.2 & 66.2 \\
Trace / ALPS-only accepted & 600 & 4.08 & 0.79 & 118.56 & 163.6 & 66.1 & 1908.3 \\
Trace / Canonical-4187 accepted & 636 & 3.91 & 0.77 & 118.56 & 160.2 & 66.1 & 1908.3 \\
Guarded / targeted accepted & 2 & 0.58 & 0.58 & 0.59 & 41.3 & 41.3 & 41.3 \\
Completion (main) / targeted accepted & 57 & 13.95 & 13.64 & 43.00 & 122.9 & 75.3 & 1934.8 \\
CNF / targeted accepted & 17 & 1.15 & 1.22 & 1.73 & 22.7 & 23.9 & 24.5 \\
\bottomrule
\end{tabular}
\end{table}

%% file: generated/equations_table.tex
\begin{center}
\refstepcounter{table}\label{tab:equations}
\parbox{\textwidth}{\small\textbf{Table \thetable.} The 32 certified Austin
identities, grouped by duality.  The binary operation is written as $\ast$.
The first 28 identities are new classifications established by trace search.
The final two pairs were previously classified by ALPS and are independently
reconstructed here by the supplemental searches.}
\medskip
\scriptsize
\setlength{\tabcolsep}{4pt}
\begin{tabular}{@{}>{\raggedright\arraybackslash}p{2.3cm}>{\raggedright\arraybackslash}p{8.3cm}@{}}
\toprule
Law & Identity \\
\midrule
\Law{4952} & $x \simeq y \ast (x \ast (y \ast (y \ast (z \ast y))))$ \\
\Law{41252} & $x \simeq ((((y \ast z) \ast y) \ast y) \ast x) \ast y$ \\
\addlinespace
\Law{40914} & $x \simeq ((((y \ast x) \ast y) \ast z) \ast x) \ast z$ \\
\Law{4957} & $x \simeq y \ast (x \ast (y \ast (z \ast (x \ast z))))$ \\
\addlinespace
\Law{5012} & $x \simeq y \ast (x \ast (z \ast (z \ast (y \ast z))))$ \\
\Law{41253} & $x \simeq ((((y \ast z) \ast y) \ast y) \ast x) \ast z$ \\
\addlinespace
\Law{5066} & $x \simeq y \ast (y \ast (x \ast (y \ast (z \ast y))))$ \\
\Law{41239} & $x \simeq ((((y \ast z) \ast y) \ast x) \ast y) \ast y$ \\
\addlinespace
\Law{5141} & $x \simeq y \ast (y \ast (z \ast (y \ast (x \ast y))))$ \\
\Law{40917} & $x \simeq ((((y \ast x) \ast y) \ast z) \ast y) \ast y$ \\
\addlinespace
\Law{5295} & $x \simeq y \ast (z \ast (y \ast (y \ast (x \ast y))))$ \\
\Law{40909} & $x \simeq ((((y \ast x) \ast y) \ast y) \ast z) \ast y$ \\
\addlinespace
\Law{38303} & $x \simeq ((y \ast ((x \ast z) \ast x)) \ast y) \ast y$ \\
\Law{7701} & $x \simeq y \ast (y \ast ((x \ast (z \ast x)) \ast y))$ \\
\bottomrule
\end{tabular}
\medskip

\begin{tabular}{@{}>{\raggedright\arraybackslash}p{2.3cm}>{\raggedright\arraybackslash}p{8.3cm}@{}}
\toprule
Law & Identity \\
\midrule
\Law{7755} & $x \simeq y \ast (y \ast ((z \ast (x \ast x)) \ast y))$ \\
\Law{38249} & $x \simeq ((y \ast ((x \ast x) \ast z)) \ast y) \ast y$ \\
\addlinespace
\Law{9345} & $x \simeq y \ast ((x \ast y) \ast (z \ast (y \ast y)))$ \\
\Law{36713} & $x \simeq (((y \ast y) \ast z) \ast (y \ast x)) \ast y$ \\
\addlinespace
\Law{9384} & $x \simeq y \ast ((x \ast z) \ast (y \ast (z \ast z)))$ \\
\Law{36714} & $x \simeq (((y \ast y) \ast z) \ast (y \ast x)) \ast z$ \\
\addlinespace
\Law{9680} & $x \simeq y \ast ((z \ast y) \ast (y \ast (x \ast y)))$ \\
\Law{36524} & $x \simeq (((y \ast x) \ast y) \ast (y \ast z)) \ast y$ \\
\addlinespace
\Law{9667} & $x \simeq y \ast ((z \ast y) \ast (x \ast (y \ast y)))$ \\
\Law{36638} & $x \simeq (((y \ast y) \ast x) \ast (y \ast z)) \ast y$ \\
\addlinespace
\Law{35036} & $x \simeq ((y \ast z) \ast ((x \ast y) \ast x)) \ast y$ \\
\Law{11081} & $x \simeq y \ast ((x \ast (y \ast x)) \ast (z \ast y))$ \\
\addlinespace
\Law{11116} & $x \simeq y \ast ((x \ast (z \ast x)) \ast (y \ast y))$ \\
\Law{34888} & $x \simeq ((y \ast y) \ast ((x \ast z) \ast x)) \ast y$ \\
\addlinespace
\Law{19966} & $x \simeq (y \ast y) \ast ((x \ast (x \ast z)) \ast z)$ \\
\Law{26105} & $x \simeq (y \ast ((y \ast x) \ast x)) \ast (z \ast z)$ \\
\addlinespace
\Law{22619} & $x \simeq (y \ast (y \ast x)) \ast ((z \ast z) \ast z)$ \\
\Law{22634} & $x \simeq (y \ast (y \ast y)) \ast ((x \ast z) \ast z)$ \\
\bottomrule
\end{tabular}
\end{center}